\documentclass[11pt]{article}
\usepackage{fullpage}
\usepackage{amsmath,amssymb,nicefrac,graphicx,lastpage,array,subfigure,varioref}
\usepackage[noend]{algorithmic}
\usepackage{algorithm}
\usepackage{microtype} 
\usepackage[inline,nomargin,draft]{fixme}
\fxsetface{inline}{\bfseries\color{red}}
\usepackage[utf8]{inputenc} 
\usepackage{pstricks,pst-node,pst-plot,pstricks-add}
\usepackage{scalefnt}
\usepackage{cite}

\newtheorem{lemma}{Lemma}
\newtheorem{theorem}{Theorem}

\newtheorem{example}{Example}
\newcommand{\qed}{\hfill\ensuremath{\Box}\medskip\\\noindent}
\newenvironment{proof}{\noindent\emph{Proof. }}{}

\newcommand{\gap}[2]{g\{#1,#2\}}
\newcommand{\epos}{\textrm{\it{endpos}}}
\newcommand{\spos}{\textrm{\it{startpos}}}
\newcommand{\ignore}[1]{}

\definecolor{patternColor}{rgb}{1,0,0}
\definecolor{gapColor}{gray}{0}
\definecolor{textColor}{gray}{0.9}
\definecolor{remTextColor}{gray}{0.7}

\definecolor{DTUred}{rgb}{0.7,0,0.15}

\title{String Matching with Variable Length Gaps\thanks{An extended abstract of this paper appeared in proceedings of the 17th Symposium on String Processing and Information Retrieval.}}
\author{Philip Bille \and Inge Li G{\o}rtz \and Hjalte Wedel Vildh{\o}j \and David Kofoed Wind}

\begin{document}

\maketitle
\begin{abstract}
  \noindent We consider string matching with variable length
  gaps. Given a string $T$ and a pattern $P$ consisting of strings
  separated by variable length gaps (arbitrary strings of length in a
  specified range), the problem is to find all ending positions of
  substrings in $T$ that match $P$. This problem is a basic primitive
  in computational biology applications. Let $m$ and $n$ be the lengths
  of $P$ and $T$, respectively, and let $k$ be the number of strings
  in $P$. We present a new algorithm achieving time $O(n\log k + m 
  +\alpha)$ and space $O(m + A)$, where $A$ is the sum of the lower
  bounds of the lengths of the gaps in $P$ and $\alpha$ is the total
  number of occurrences of the strings in $P$ within $T$. Compared to
  the previous results this bound essentially achieves the best known
  time and space complexities simultaneously. Consequently, our
  algorithm obtains the best known bounds for almost all combinations of
  $m$, $n$, $k$, $A$, and $\alpha$. Our algorithm is surprisingly simple
  and straightforward to implement.
  We also present algorithms for finding and encoding the positions of all
  strings in $P$ for every match of the pattern.
\end{abstract}

\section{Introduction}
Given integers $a$ and $b$, $0 \leq a \leq b$, a \emph{variable length gap} 
$\gap{a}{b}$ is an arbitrary string over $\Sigma$ of length between $a$ and $b$,
 both inclusive. A \emph{variable length gap pattern} (abbreviated VLG pattern) 
$P$ is the concatenation of a sequence of strings and variable length gaps,
 that is, $P$ is of the form
 $$P = P_1 \cdot \gap{a_1}{b_1} \cdot P_2 \cdot \gap{a_2}{b_2} \cdots
 \gap{a_{k-1}}{b_{k-1}} \cdot P_k \;.$$ A VLG pattern $P$
 \emph{matches} a substring $S$ of $T$ iff $S = P_1 \cdot G_1 \cdots
 G_{k-1} \cdot P_k$, where $G_i$ is any string of length between $a_i$
 and $b_i$, $i = 1, \ldots, k-1$. Given a string $T$ and a VLG pattern
 $P$, the \emph{variable length gap problem} (VLG problem) is to find
 all ending positions of substrings in $T$ that match $P$.

\begin{example}As an example, consider the problem instance over the
  alphabet $\Sigma=\{A,G,C,T\}$:
  \begin{align*}
    T &= \text{ATCGGCTCCAGACCAGTACCCGTTCCGTGGT} \\
    P &= \text{A} \cdot \gap{6}{7} \cdot \text{CC} \cdot \gap{2}{6}
    \cdot \text{GT} 
\end{align*}\label{ex:runningexample}
The solution to the problem instance is the set of positions
$\{17,28,31\}$. For example the solution contains 17, since the
substring ATCGGCTCCAGACCAGT, ending at position 17 in $T$, matches $P$.
\end{example}

Variable length gaps are frequently used in computational biology
applications~\cite{MM1993,Myers1996,NR2003,FG2008,FG2009}. For
instance, the PROSITE data base~\cite{BB1994,Hofmann1999} supports
searching for proteins specified by VLG patterns.

\subsection{Previous Work}
We briefly review the main worst-case bounds for the VLG problem. As
above, let $P = P_1 \cdot \gap{a_1}{b_1} \cdot P_2 \cdot
\gap{a_2}{b_2} \cdots \gap{a_{k-1}}{b_{k-1}} \cdot P_k$ be a VLG
pattern consisting of $k$ strings, and let $T$ be a string. To state
the bounds, let $m = \sum_{i=1}^{k} |P_i|$ be the sum of the lengths
of the strings in $P$ and let $n$ be the length of $T$.

The simplest approach to solve the VLG problem is to translate $P$
into a regular expression and then use an algorithm for regular
expression matching. Unfortunately, the translation produces a regular
expression significantly longer than $P$, resulting in an inefficient
algorithm. Specifically, suppose that the alphabet $\Sigma$ contains
$\sigma$ characters, that is, $\Sigma = \{c_1, \ldots, c_\sigma\}$.
Using standard regular expression operators (union and concatenation),
we can translate $\gap{a}{b}$ into the expression
$$
\gap{a}{b} = \overbrace{C \cdots C}^a\overbrace{(C|\epsilon) \cdots (C|\epsilon)}^{b-a},
$$
\noindent where $C$ is shorthand for the expression $(c_1\mid c_2 \mid
\ldots c_\sigma)$. Hence, a variable length gap $\gap{a}{b}$,
represented by a constant length expression in $P$, is translated into
a regular expression of length $\Omega(\sigma b)$. Consequently, a
regular expression $R$ corresponding to $P$ has length $\Omega(B\sigma
+ m)$, where $B = \sum_{i=1}^{k-1} b_i$ is the sum of the upper bounds
of the gaps in $P$. Using Thompson's textbook regular expression
matching algorithm~\cite{Thomp1968} this leads to an algorithm for the
VLG problem using $O(n(B\sigma + m))$ time. Even with the fastest
known algorithms for regular expression matching this bound can only
be improved by at most a polylogarithmic
factor~\cite{Myers1992,NR2004,Bille06,BT2009}.

Several algorithms that improve upon the direct translation to a
regular expression matching problem have been
proposed~\cite{MM1993,Myers1996,CIMRTT2002,NR2003,LAIP2003,MPVZ2005,RILMS2006,FG2008,FG2009,BT2010}. Some of these are able to solve more 
general versions of the problem, such as searching for 
patterns that also contain character classes and variable 
length gaps with negative length. Most of the algorithms are based on fast simulations of non-deterministic finite automata. In particular, Navarro 
and Raffinot~\cite{NR2003} gave an algorithm using $O(n(\frac{m + B}{w} +
1))$ time, where $w$ is the number of bits in a memory word.
Fredrikson and Grabowski~\cite{FG2008,FG2009} improved this bound for
the case when all variable length gaps have lower bound $0$ and
identical upper bound $b$. Their fastest algorithm achieves
$O(n(\frac{m\log \log b}{w} + 1))$ time. Very recently, Bille and
Thorup~\cite{BT2010} gave an algorithm using $O(n (k
\frac{\log w}{w} + \log k) + m \log m + A)$ time and $O(m + A)$ space,
where $A = \sum_{i=1}^{k-1} a_i$ is the sum of the lower bounds on the lengths of the gaps.
Note that if we assume that the $nk$ term dominates and ignore the
$w/\log w$ factor, the time bound reduces to $O(nk)$.

An alternative approach, suggested independently by Morgante et
al.~\cite{MPVZ2005} and Rahman et al.~\cite{RILMS2006}, is to design
algorithms that are efficient in terms of the total number of
occurrences of the $k$ strings $P_1, \ldots, P_k$ within $T$. Let
$\alpha$ be this number, e.g., in Example~\ref{ex:runningexample} A,
CC, and GT occur $5$, $5$, and $4$ times in $T$. Hence, $\alpha = 5 +
5 + 4 = 14$. Rahman et al.~\cite{RILMS2006} gave an algorithm using
$O(n\log k + m + \alpha\log(\max_{1\leq i < k} (b_i - a_i)))$
time\footnote{The bound stated in the paper does not include the $\log
  k$ factor, since they assume that the size of the alphabet is
  constant. We make no assumption on the alphabet size and therefore
  include it here.}. Morgante et al.~\cite{MPVZ2005} gave a faster
algorithm using $O(n\log k + m + \alpha)$ time. Each of the $k$
strings in $P$ can occur at most $n$ times and therefore $\alpha \leq
nk$. Hence, in the typical case when the strings occur less
frequently, i.e, $\alpha = o(n(k\frac{\log w}{w} + \log k))$, these
approaches are faster. However, unlike the automata based algorithm
that only use $O(m + A)$ space, both of these algorithm use $\Theta(m
+ \alpha)$ space. Since $\alpha$ typically increases with the length
of $T$, the space usage of these algorithms is likely to quickly become
a bottleneck for processing large biological data bases.

\subsection{Our Results}
We address the basic question of whether is it possible to design an
algorithm that simultaneously is fast in the total number of
occurrences of the $k$ strings and uses little space. We show the
following result.
\begin{theorem}\label{thm:vlgproblem}
  Given a string $T$ and a $VLG$ pattern $P$ with $k$ strings, we can
  solve the variable length gaps matching problem in time $O(n\log k + m + \alpha)$ and space $O(m + A)$. Here, $\alpha$ is
  the number of occurrences of the strings of $P$ in $T$ and $A$ is
  the sum of the lower bounds of the gaps. 
\end{theorem}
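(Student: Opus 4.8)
The plan is to reduce the VLG problem to a single left-to-right sweep over the occurrences of the individual strings $P_1,\dots,P_k$, produced by one pass of the Aho--Corasick automaton. Building the automaton over $P_1,\dots,P_k$ and running it on $T$ reports, for each text position in increasing order, every $P_i$ that ends there; this takes $O(n\log k + m)$ time plus $O(1)$ per reported occurrence, uses $O(m)$ space, and crucially streams the $\alpha$ occurrences sorted by ending position. On top of this stream I would run a dynamic program over the $k$ ``levels'' of the pattern. Call a position $j$ a \emph{level-$i$ completion} if some substring of $T$ ending at $j$ matches the prefix $P_1\cdot \gap{a_1}{b_1}\cdots \gap{a_{i-1}}{b_{i-1}}\cdot P_i$; the answers are exactly the level-$k$ completions, and a level-$1$ completion is just any occurrence of $P_1$.

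The recurrence is the natural one. If $P_i$ occurs ending at $j$, it starts at $s=j-|P_i|+1$, and it turns $j$ into a level-$i$ completion iff there is a level-$(i-1)$ completion $j'$ whose gap window admits this start, i.e.\ $s\in[\,j'+a_{i-1}+1,\;j'+b_{i-1}+1\,]$. Read the other way, each level-$(i-1)$ completion $j'$ \emph{opens} an interval $[\,j'+a_{i-1}+1,\;j'+b_{i-1}+1\,]$ of permissible start positions for $P_i$, and $j$ becomes a level-$i$ completion precisely when an occurrence of $P_i$ starts inside the union of these intervals. I would therefore maintain, for each level $i$, a FIFO queue $Q_i$ holding this union as a list of disjoint maximal intervals. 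Because completions are discovered in increasing order of $j'$, the opened intervals have increasing left and right endpoints, so each newly opened interval either merges into the rightmost interval of $Q_i$ or is appended behind it; and because occurrences of $P_{i+1}$ are queried in increasing order of start position, each query evicts the now-dead intervals from the front of $Q_i$ and tests membership in the front interval. Every occurrence thus triggers $O(1)$ amortized queue work.

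Correctness follows by induction on $i$: assuming $Q_{i-1}$ faithfully represents the live windows opened by all level-$(i-1)$ completions, the membership test correctly identifies the new level-$i$ completions, which in turn open the windows inserted into $Q_i$. Evicting an interval once its right endpoint falls below the smallest start position any future occurrence could have is safe, since both insertions and queries only move rightward. Summing the automaton cost, the $O(1)$-amortized work per occurrence, and the output gives the claimed time $O(n\log k + m + \alpha)$.

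The step I expect to be the crux is the space bound, i.e.\ showing that only the lower bounds $a_i$, not the upper bounds $b_i$, enter the space usage. A raw list of live completions, or of raw windows, would cost $\Theta(b_i-a_i)$ per level and sum to $\Theta(B)$. The point of storing $Q_i$ as \emph{merged} disjoint intervals is that each maximal interval is a union of windows each of length $b_i-a_i+1$, so every maximal interval has length at least $b_i-a_i+1$. I would then argue that all intervals alive in $Q_i$ save possibly the leftmost lie in a span of length $O(b_i+|P_{i+1}|)$ ahead of the current sweep position; dividing the span by the minimum interval length bounds the number of live intervals in $Q_i$ by $O\!\left(\frac{b_i+|P_{i+1}|}{b_i-a_i+1}\right)=O(a_i+|P_{i+1}|+1)$. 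Summing over the $k-1$ levels, and adding the $O(m)$ used by the automaton and the stored strings, yields total space $O(m+A)$. Verifying this span-versus-length accounting, and confirming that the lazy eviction keeps each $Q_i$ within it at \emph{all} times and not merely at query times, is the part of the argument that requires the most care.
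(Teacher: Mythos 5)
Your plan coincides with the paper's own solution in every structural respect: an Aho--Corasick pass streaming occurrences by end position, one sorted queue per level of disjoint \emph{merged} intervals, membership tested only against the front interval after discarding dead ones, and a span-divided-by-minimum-interval-length count for the space bound. The correctness induction and the time accounting are right, up to one small omission: building the automaton costs $O(m\log k)$, not $O(m)$, so the case $m > n$ must be disposed of separately (it is trivial, since any match has length at least $m$, and the paper makes exactly this reduction before analyzing the main case).

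The genuine gap is the point you flagged at the end, and it is not merely a verification ``requiring care'': for the algorithm as you specified it, the invariant you hope to confirm is false. You evict dead intervals from $Q_i$ only when a query arrives, i.e., only when an occurrence of $P_i$ is reported, while insertions into $Q_i$ are driven by occurrences of $P_{i-1}$; nothing forces queries to be interleaved with insertions. Concretely, take $P = \text{A} \cdot \gap{1}{1} \cdot \text{C}$ and $T = \text{AGAGAG}\cdots\text{AG}$. Every occurrence of A (positions $1,3,5,\dots$) is a level-$1$ completion and opens the singleton window $[\tau+2,\tau+2]$; consecutive windows do not merge, and C never occurs, so no query --- hence no eviction --- ever takes place, and $Q_2$ grows to $\Theta(n)$ intervals while the claimed bound is $O(m+A)=O(1)$. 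So under lazy eviction the queues are \emph{not} confined to your span at all times, and the counting argument applies only at query instants. The fix is a small but necessary change to the algorithm, and it is precisely what the paper's Algorithm~\ref{alg:ouralgorithm} does: upon every reported occurrence of $P_i$ it removes dead ranges from \emph{both} the queried list and the list about to receive an insertion (step~\ref{algstep:removedead} cleans $L_i$ \emph{and} $L_{i+1}$). With eviction also at insertion time, a list's size increases only immediately after that list has been purged; at such an instant its front interval $[s,t]$ is still alive, so the newly appended range ends within distance $|P_i|+b_{i-1}+O(1)$ of $t$, each non-front interval together with the gap preceding it occupies at least $b_{i-1}-a_{i-1}+2$ positions, and the per-list bound $O\bigl((|P_i|+a_{i-1})/(b_{i-1}-a_{i-1}+2)+1\bigr)$ follows at that instant --- hence at all times, since sizes never grow between insertions. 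Summing over levels then gives $O(m+A)$ exactly as you intended.
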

Hence, we match the best known time bounds in terms of $\alpha$ and
the space for the fastest automata based approach. Consequently,
whenever $\alpha = o(n(k\frac{\log w}{w} + \log k))$ the time and
space bounds of Theorem~\ref{thm:vlgproblem} are the best
known. Our algorithm uses a standard comparison based version of the Aho-Corasick automaton for multi-string
matching~\cite{AC1975}. 
If the size of the alphabet is constant or we use hashing the $\log k$
factor in the running time disappears. Furthermore, our algorithm is surprisingly simple and 
straightforward to implement.

In some cases, we may also be interested in outputting not only the
ending positions of matches of $P$, but also the positions of the individual
strings in $P$ for each match of $P$ in $T$. Note that there can be
exponentially many, i.e., $\Omega(\prod_{i=1}^{k-1} 1+b_i - a_i)$, of these occurrences ending at the same position in $T$. Morgante et al.~\cite{MPVZ2005} showed
how to encode all of these in a graph of size $\Theta(\alpha)$. We show how our algorithm can be extended to efficiently output such a graph. Furthermore, we show two solutions for outputting the positions encoded in the graph. Both solutions use little space since they avoid the need to store the entire graph. The first solution is a black-box solution that works with any algorithm for constructing the graph. The second is a direct approach obtained using a simple extension of our algorithm.

Recently, Haapasalo et al.~\cite{HS2011} studied practical algorithms for an extension of the VLG problem that allows multiple patterns and gaps with unbounded upper bounds. We note that the result of Thm. 1 is straightforward to generalize to this case.

\subsection{Technical Overview}
The previous work by Morgante et al.~\cite{MPVZ2005} and Rahman et
al.~\cite{RILMS2006} find all of the $\alpha$ occurrences of
the strings $P_1, \ldots, P_k$ of $P$ in $T$ using a standard
multi-string matching algorithm (see
Section~\ref{sec:multistringmatching}). From these, they construct a
graph of size $\Omega(\alpha)$ to represent possible combinations of
string occurrences that can be combined to form occurrences of $P$.

Our algorithm similarly finds all of the occurrences of the strings of
$P$ in $T$. However, we show how to avoid constructing a large graph
representing the possible combinations of occurrences. Instead we
present a way to efficiently represent sufficient information to
correctly find the occurrences of $P$, leading to a significant space
improvement from $O(m + \alpha)$ to $O(m + A)$. Surprisingly, the
algorithm needed to achieve this space bound is very simple, and only
requires maintaining a set of sorted lists of disjoint intervals. Even
though the algorithm is simple the space bound achieved by it is
non-obvious. We give a careful analysis leading to the $O(m + A)$
space bound.

Our space-efficient black-box solution for reporting the positions of the
individual strings in $P$ for each match of $P$ in $T$ is obtained by
constructing the graph for overlapping chunks of $T$ of size $2(m+B)$.
Hence the solution is parametrized by the time and space complexity of the 
actual algorithm used to construct the graph.

\section{Algorithm}
In this section we present the algorithm. For completeness, we first briefly
review the classical Aho-Corasick algorithm for multiple string
matching in Section~\ref{sec:multistringmatching}. We then
define the central idea of \emph{relevant occurrences} in
Section~\ref{sec:relevantocc}. We present the full algorithm in
Section~\ref{sec:thealgorithm} and analyze it in Section~\ref{sec:analysis}.

\subsection{Multi-String Matching}\label{sec:multistringmatching}
\newcommand{\path}{\ensuremath\mathrm{path}}
\newcommand{\depth}{\ensuremath\mathrm{depth}}
\newcommand{\occ}{\ensuremath\mathrm{occ}}
\newcommand{\AC}{\ensuremath\mathrm{AC}}

Given a set of pattern strings $\mathcal{P} = \{P_1, \ldots, P_k\}$ of total
length $m$ and a text $T$ of length $n$ the \emph{multi-string
  matching problem} is to report all occurrences of each pattern
string in $T$. Aho and Corasick~\cite{AC1975} generalized the
classical Knuth-Morris-Pratt algorithm~\cite{KMP1977} for single
string matching to multiple strings. The \emph{Aho-Corasick automaton}
(AC-automaton) for $\mathcal{P}$, denoted $\AC(\mathcal{P})$, consists
of the trie of the patterns in $\mathcal{P}$. Hence, any path from the root of
the trie to a state $s$ corresponds to a prefix of a pattern in $\mathcal{P}$.
We denote this prefix by $\path(s)$. For each state $s$ there is also
a special \emph{failure transition} pointing to the unique state $s'$
such that $\path(s')$ is the longest prefix of a pattern in $\mathcal{P}$
matching a proper suffix of $\path(s)$. Note that the depth of $s'$ in
the trie is always strictly smaller for non-root states than the depth
of $s$.

Finally, for each state $s$ we store the subset $\occ(s) \subseteq \mathcal{P}$
of patterns that match a suffix of $\path(s)$. Since the patterns in
$\occ(s)$ share suffixes we can represent $\occ(s)$ compactly by
storing for $s$ the index of the longest string in $\occ(s)$ and a
pointer to the state $s'$ such that $\path(s')$ is the second longest
string if any. In this way we can report $\occ(s)$ in $O(|\occ(s)|)$
time. 
 
The maximum outdegree of any state is bounded by the number of leaves
in the trie which is at most $k$. Hence, using a standard
comparison-based balanced search tree to index the trie transitions
out of each state we can construct $\AC(\mathcal{P})$ in $O(m\log k)$ time and
$O(m)$ space.

To find the occurrences of $\mathcal{P}$ in $T$, we read the characters
of $T$ from left-to-right while traversing $\AC(\mathcal{P})$ to
maintain the longest prefix of the strings in $\mathcal{P}$ matching
$T$. At a state $s$ and character $c$ we proceed as follows. If
$c$ matches the label of a trie transition $t$ from $s$, the next
state is the child endpoint of $t$. Otherwise, we recursively follow
failure transitions from $s$ until we find a state $s'$ with a trie
transition $t'$ labeled $c$. The next state is then the child
endpoint of $t'$. If no such state exists, the next state is the root
of the trie. For each failure transition traversed in the algorithm we
must traverse at least as many trie transitions. Therefore, the total
time to traverse $\AC(\mathcal{P})$ and report occurrences is $O(n
\log k + \alpha)$, where $\alpha$ is the total number of occurrences.

Hence, the Aho-Corasick algorithm solves multi-string matching in
$O((n+m)\log k + \alpha)$ time and $O(m)$ space.

\subsection{Relevant Occurrences}\label{sec:relevantocc}

For a substring $x$ of $T,$ let $\spos(x)$ and $\epos(x)$ denote the start and end position of $x$ in $T$, respectively.
Let $x$ be an occurrence of $P_i$ with $\tau = \epos(x)$ in $T$, and let $R(x)$ denote the range $[\tau+a_i+1;\tau+b_i+1]$ in $T$. 

An occurrence $y$ of $P_i$ in $T$ is a \emph{relevant occurrence} of $P_i$ iff $i=1$ or $\spos(y) \in R(x)$, for some relevant occurrence $x$ of $P_{i-1}$. 
See Fig.~\ref{fig:relevantocc} for an example. Relevant occurrences are similar to the \emph{valid occurrences} defined in~\cite{RILMS2006}. The  difference is that a valid occurrence is an occurrence of $P_{i+1}$ that is in $R(x)$ for \emph{any} occurrence $x$ of $P_i$ in $T$, i.e., $x$ need not be a valid occurrence itself. 

\begin{figure}[tb]
\centering
\newpsstyle{occStyle}{%
  fillstyle=vlines,hatchcolor=black!40,
  hatchwidth=0.1\pslinewidth,
  hatchsep=1\pslinewidth,
  linewidth=0.3pt
}

\psset{unit=0.5cm,yunit=0.8}
\begin{pspicture}[showgrid=false](0,-2)(24,6)
\footnotesize
\rput(0,0){\psframe[style=occStyle](3.5,1)}
{\psset{linewidth=0.3pt,linestyle=dashed,dash=1pt 1pt}
\psline{-}(0.5,1)(0.5,0)
\psline{-}(1,1)(1,0)
\rput(1.75,0.5){$P_i$}
\psline{-}(2.5,1)(2.5,0)
\psline{-}(3,1)(3,0)
}

\psline{|-|}(10.25,0.5)(14.75,0.5)\rput(12.5,0){$R(x)$}

\rput(8,1){\psframe[style=occStyle](5,1)}
{\psset{linewidth=0.3pt,linestyle=dashed,dash=1pt 1pt}
\psline{-}(8.5,2)(8.5,1)
\psline{-}(9,2)(9,1)
\rput(10.5,1.5){$P_{i+1}$}
\psline{-}(12,2)(12,1)
\psline{-}(12.5,2)(12.5,1)
}

\rput(11.5,2.5){\psframe[style=occStyle](5,1)}
{\psset{linewidth=0.3pt,linestyle=dashed,dash=1pt 1pt}
\psline{-}(12,3.5)(12,2.5)
\psline{-}(12.5,3.5)(12.5,2.5)
\rput(14,3){$P_{i+1}$}
\psline{-}(15.5,3.5)(15.5,2.5)
\psline{-}(16,3.5)(16,2.5)
}
\rput(15,4){\psframe[style=occStyle](5,1)}
{\psset{linewidth=0.3pt,linestyle=dashed,dash=1pt 1pt}
\psline{-}(15.5,5)(15.5,4)
\psline{-}(16,5)(16,4)
\rput(17.5,4.5){$P_{i+1}$}
\psline{-}(19,5)(19,4)
\psline{-}(19.5,5)(19.5,4)
}

{
\psset{linewidth=0.5pt,arrowscale=1.5}
\psline{->}(1,3)(1.5,1.4)\rput(0.9,3.4){$x$}
\psline{->}(6.9,1.9)(7.8,1.5)\rput[r](7,2){Not relevant~}
\psline{->}(9.9,3.4)(11.3,3)\rput[r](10,3.5){Relevant~}
\psline{->}(12.9,4.9)(14.7,4.5)\rput[r](13,5){Not relevant~}
\psline{->}(17,1)(15,2.2)\rput(17.3,0.8){$y$}
}
\psline[arrowscale=2.5,linewidth=0.8pt]{->}(0,-1)(24,-1)\rput(22.5,-1.8){\scriptsize Position in $T$}

{\psset{linestyle=dashed,linewidth=0.3pt,dash=0.5pt 1pt}
\psline{-}(3.25,0)(3.25,-1)
\psline{-}(19.75,4)(19.75,-1)
}

\psline(3.25,-0.8)(3.25,-1.2)\rput(3.25,-1.7){$\tau$}
\psline(10.25,-0.8)(10.25,-1.2)\rput(10.25,-1.7){$\tau+a_i+1$}
\psline(14.75,-0.8)(14.75,-1.2)\rput(14.75,-1.7){$\tau+b_i+1$}
\psline(19.75,-0.8)(19.75,-1.2)\rput(19.75,-1.7){$u$}

\end{pspicture}
\caption{In this figure $x$ is an occurrence of $P_i$ in $T$ reported at position $\tau$. The first and last occurrence of $P_{i+1}$ start outside $R(x)$ thereby violating the $i$th gap constraint, so these occurrences are not relevant compared to $x$. The second occurrence $y$ of $P_{i+1}$ starts in $R(x)$, so if $x$ is itself relevant, then $y$ is also relevant. \label{fig:relevantocc}}
\end{figure}
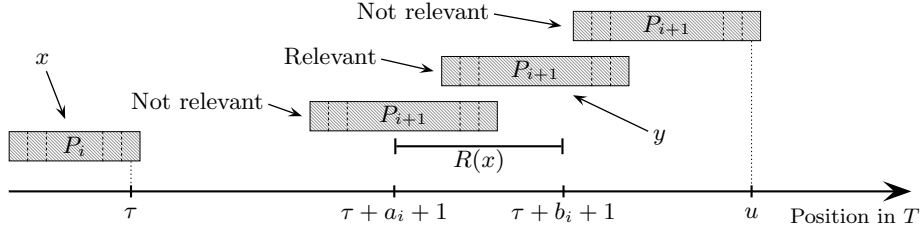

From the definition of relevant occurrences, it follows directly that we can solve the VLG problem by finding the relevant occurrences of $P_k$ in $T$. Specifically, we have the following result.

\begin{lemma} Let $S$ be a substring of $T$ matching the VLG pattern $S_1 \cdot \gap{a_1}{b_1} \cdot S_2 \cdot \gap{a_2}{b_2} \cdots S_k$. Then, $\spos(S_{i+1}) \in R(S_i)$ for all $i=1,\ldots, k-1$.
\end{lemma}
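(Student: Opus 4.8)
The plan is to unfold the definition of matching and then read off the claim by a direct positional computation. Since $S$ matches the VLG pattern $S_1 \cdot \gap{a_1}{b_1} \cdots S_k$, the definition of matching supplies a decomposition $S = S_1 \cdot G_1 \cdot S_2 \cdots G_{k-1} \cdot S_k$, where each $G_i$ is a string with $a_i \le |G_i| \le b_i$. I would fix an arbitrary $i \in \{1,\ldots,k-1\}$ and write $\tau_i = \epos(S_i)$ for the end position in $T$ of the factor $S_i$ within this occurrence of $S$.

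The key observation is that the factors in this concatenation occupy consecutive, non-overlapping position ranges of $T$. Hence the character immediately following $S_i$ is the first character of $G_i$, sitting at position $\tau_i + 1$; the gap $G_i$ then occupies the $|G_i|$ positions $\tau_i + 1, \ldots, \tau_i + |G_i|$, and the factor $S_{i+1}$ begins immediately afterwards. This yields the single identity $\spos(S_{i+1}) = \tau_i + |G_i| + 1$.

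It then remains only to substitute the gap-length bounds. Since $a_i \le |G_i| \le b_i$, the identity above gives $\tau_i + a_i + 1 \le \spos(S_{i+1}) \le \tau_i + b_i + 1$, that is, $\spos(S_{i+1}) \in [\tau_i + a_i + 1;\, \tau_i + b_i + 1] = R(S_i)$, which is precisely the claim. As $i$ was arbitrary, this holds for all $i = 1,\ldots,k-1$.

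The argument is essentially a definition chase, so no real obstacle arises; the only point requiring care is the off-by-one bookkeeping in the second step. In particular, I would check the boundary case $a_i = 0$ with $|G_i| = 0$, where $S_{i+1}$ abuts $S_i$ directly and $\spos(S_{i+1}) = \tau_i + 1$ must still land in $R(S_i)$ — which it does, since the lower endpoint of $R(S_i)$ is then exactly $\tau_i + 1$.
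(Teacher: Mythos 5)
Your proof is correct and coincides with the paper's intent: the paper states this lemma without an explicit proof, treating it as an immediate consequence of the definitions, and your positional identity $\spos(S_{i+1}) = \epos(S_i) + |G_i| + 1$ combined with $a_i \le |G_i| \le b_i$ is precisely the definition chase being omitted. Your off-by-one bookkeeping (including the $|G_i|=0$ boundary case) agrees with the paper's definition $R(S_i) = [\epos(S_i)+a_i+1;\, \epos(S_i)+b_i+1]$, so there is nothing to add.
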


\subsection{The Algorithm}\label{sec:thealgorithm}
Algorithm~\ref{alg:ouralgorithm} computes the relevant occurrences of $P_k$ using the output from the AC automaton. The idea behind the algorithm is to keep track of the ranges defined by the relevant occurrences of each subpattern $P_i$, such that we efficiently can check if an occurrence of $P_i$ is relevant or not. 
More precisely, for each subpattern $P_i$, $i=2,\ldots,k$, we maintain a sorted list $L_i$ containing the ranges defined by previously reported relevant occurrences of $P_{i-1}$. When an occurrence of $P_i$ is reported by the AC automaton, we can determine whether it is relevant by checking if it starts in a range contained in $L_i$ (step~\ref{algstep:relevancecheck}). Initially, the lists $L_2,L_3,\ldots,L_k$ are empty. When a relevant occurrence of $P_i$ is reported, we add the range defined by this new occurrence to the end of $L_{i+1}$. In case the new range $[s,t]$ overlaps or adjoins the last range $[q,r]$ in $L_{i+1}$ ($s \leq r+1$) we merge the two ranges into a single range $[q,t]$. 

Let $\tau$ denote the current position in $T$. A range $[a,b] \in L_i$ is \emph{dead at position} $\tau$ iff $b < \tau - |P_i|$. When a range is dead no future occurrences $y$ of $P_i$ can start in that range since $\epos(y) \geq \tau$  implies $\spos(y) \geq \tau - |P_i|$. 
In Fig.~\ref{fig:relevantocc} the range $R(x)$ defined by $x$ dies, when position $u$ is reached. Our algorithm repeatedly removes any dead ranges to limit the size of the lists $L_2,L_3,\ldots,L_k$. To remove the dead ranges in step~\ref{algstep:removedead} we traverse the list and delete all dead ranges until we meet a range that is not dead. Since the lists are sorted, all remaining ranges in the list are still alive. See Fig.~\ref{fig:runningexample} for an example.

\begin{algorithm}[tb]
\caption{Algorithm solving the VLG problem for a VLG pattern $P$ and a string $T$.\label{alg:ouralgorithm}}

\begin{enumerate}
\item Build the AC-automaton for the subpatterns $P_1,P_2,\ldots,P_k$.
\item Process $T$ using the automaton and each time an occurrence $x$ of $P_i$ is reported at position $\tau=\epos(x)$ in $T$ do:
\begin{enumerate}
        \item\label{algstep:removedead} Remove any dead ranges from the lists $L_i$ and $L_{i+1}$.
        \item\label{algstep:relevancecheck} If $i=1$ or $\tau - |P_i|=\spos(x)$ is contained in the first range in $L_i$ do:
        \begin{enumerate}
                \item\label{algstep:appendrange}If $i<k$: Append the range $R(x)=[\tau + a_i + 1 ; \tau + b_i + 1]$ to the end of $L_{i+1}$. If the range overlaps or adjoins the last range in $L_{i+1}$, the two ranges are merged into a single range.
                \item If $i=k$: Report $\tau$.
        \end{enumerate}
\end{enumerate}
\end{enumerate}
\vspace{-0.3cm}
\end{algorithm}

\ignore{
\subsection{Illustration of the algorithm}
To illustrate how our algorithm works consider the VLG problem in Example~\ref{ex:runningexample}. Figure~\ref{fig:runningexample} shows all the occurrences of the components of the VLG pattern $P$ and the ranges that they define in $T$. At position $\tau=26$, the occurrence $x$ of $P_2$ is reported by the Aho-Corasick automaton. At that point in the execution of the algorithm, the contents of the lists $L_2$ and $L_3$ are as follows:
\[
L_2 ~=~ \bigl[\; [17;20],[22;23],[25;26]\; \bigl] \qquad \mathrm{and} \qquad
L_3 ~=~ \bigl[\; [23,28] \; \bigl]
\]
\noindent In step~\ref{algstep:removedead} of the algorithm, the ranges $[17;20]$ and $[22;23]$ are removed from $L_2$, since they end before $\tau - |P_i|=24$. In step~\ref{algstep:relevancecheck} the algorithm determines that $x$ is relevant because $startpox(x)=25\in [25;26]$. Then $R(x)=[29;33]$ is appended to $L_3$ in step~\ref{algstep:appendrange} and the resulting lists become:
\[
L_2 ~=~ \bigl[\; [25;26]\; \bigl] \qquad \mathrm{and} \qquad
L_3 ~=~ \bigl[\; [23;33] \; \bigl]
\]
\noindent While the algorithm processes the remaining characters of $T$ and discovers the two last relevant occurrences of $P_3$, no further changes are made to $L_2$ and $L_3$.
}

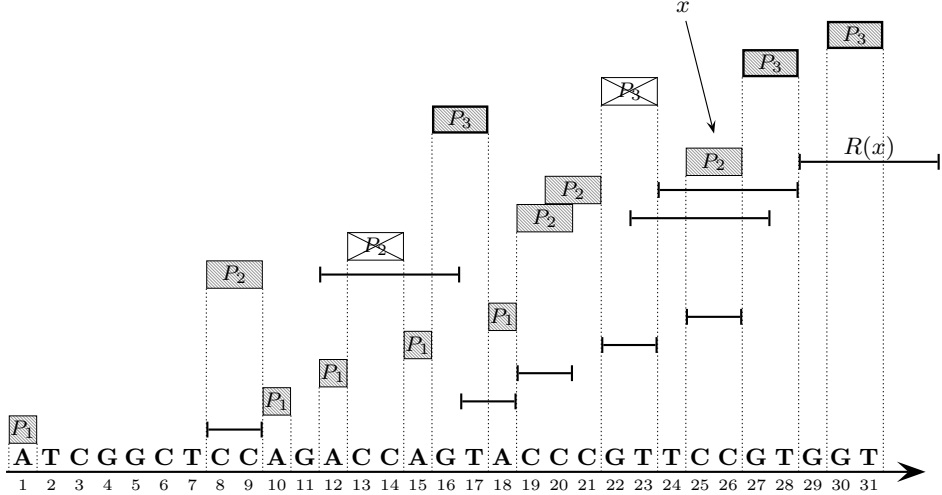
\begin{figure}[tb]
\centering
\newpsstyle{occStyle}{%
  fillstyle=vlines,hatchcolor=black!40,
  hatchwidth=0.1\pslinewidth,
  hatchsep=1\pslinewidth,
  linewidth=0.3pt
}

\psset{unit=0.5cm,yunit=0.75,xunit=1.5,arrowscale=1}
\begin{pspicture}[showgrid=false](1.2,2)(18,19)
\footnotesize
\psline[arrowscale=2.5,linewidth=0.8pt]{->}(1.2,3)(17.5,3)

{\scriptsize
\rput(1.25,4){\psframe[style=occStyle](0.5,1)}
\rput(1.5,4.5){$P_1$}
\psline{|-|}(4.75,4.5)(5.75,4.5)

\rput(5.75,5){\psframe[style=occStyle](0.5,1)}
\rput(6.0,5.5){$P_1$}
\psline{|-|}(9.25,5.5)(10.25,5.5)

\rput(6.75,6){\psframe[style=occStyle](0.5,1)}
\rput(7.0,6.5){$P_1$}
\psline{|-|}(10.25,6.5)(11.25,6.5)

\rput(8.25,7){\psframe[style=occStyle](0.5,1)}
\rput(8.5,7.5){$P_1$}
\psline{|-|}(11.75,7.5)(12.75,7.5)

\rput(9.75,8){\psframe[style=occStyle](0.5,1)}
\rput(10,8.5){$P_1$}
\psline{|-|}(13.25,8.5)(14.25,8.5)

\rput(4.75,9.5){\psframe[style=occStyle](1,1)}
\rput(5.25,10){$P_2$}
\psline{|-|}(6.75,10)(9.25,10)

\rput(7.25,10.5){\psframe[linewidth=0.3\pslinewidth](1,1)}
\psline[linewidth=0.3\pslinewidth]{-}(7.25,10.5)(8.25,11.5)
\psline[linewidth=0.3\pslinewidth]{-}(7.25,11.5)(8.25,10.5)
\rput(7.75,11){$P_2$}

\rput(10.25,11.5){\psframe[style=occStyle](1,1)}
\rput(10.75,12){$P_2$}
\psline{|-|}(12.25,12)(14.75,12)

\rput(10.75,12.5){\psframe[style=occStyle](1,1)}
\rput(11.25,13){$P_2$}
\psline{|-|}(12.75,13)(15.25,13)

\rput(13.25,13.5){\psframe[style=occStyle](1,1)}
\rput(13.75,14){$P_2$}
\psline{|-|}(15.25,14)(17.75,14)

\rput(8.75,15){\psframe[style=occStyle,linewidth=3\pslinewidth](1,1)}
\rput(9.25,15.5){$P_3$}

\rput(11.75,16){\psframe[linewidth=0.3\pslinewidth](1,1)}
\psline[linewidth=0.3\pslinewidth]{-}(11.75,16)(12.75,17)
\psline[linewidth=0.3\pslinewidth]{-}(11.75,17)(12.75,16)
\rput(12.25,16.5){$P_3$}

\rput(14.25,17){\psframe[style=occStyle,linewidth=3\pslinewidth](1,1)}
\rput(14.75,17.5){$P_3$}

\rput(15.75,18){\psframe[style=occStyle,linewidth=3\pslinewidth](1,1)}
\rput(16.25,18.5){$P_3$}
}

{\psset{linewidth=0.03,arrowscale=1.5}
\psline{->}(13.25,19)(13.75,15)\rput(13.2,19.5){$x$}
}

\rput(16.5,14.5){$R(x)$}

{
\psset{linestyle=dashed,linewidth=0.3pt,dash=0.5pt 1pt}
\psline{-}(1.25,4)(1.25,3)
\psline{-}(1.75,4)(1.75,3)
\psline{-}(4.75,9.5)(4.75,3)
\psline{-}(5.75,9.5)(5.75,3)
\psline{-}(6.25,5)(6.25,3)
\psline{-}(6.75,6)(6.75,3)
\psline{-}(7.25,10.5)(7.25,3)
\psline{-}(8.25,10.5)(8.25,3)
\psline{-}(8.75,15)(8.75,3)
\psline{-}(9.75,15)(9.75,3)
\psline{-}(10.25,11.5)(10.25,3)
\psline{-}(11.75,16)(11.75,3)
\psline{-}(12.75,16)(12.75,3)
\psline{-}(13.25,13.5)(13.25,3)
\psline{-}(14.25,17)(14.25,3)
\psline{-}(15.25,17)(15.25,3)
\psline{-}(15.75,18)(15.75,3)
\psline{-}(16.75,18)(16.75,3)
}

{\bfseries
\newcommand{\searchtextheight}{3.55}
\rput(1.5,\searchtextheight){A}
\rput(2.0,\searchtextheight){T}
\rput(2.5,\searchtextheight){C}
\rput(3.0,\searchtextheight){G}
\rput(3.5,\searchtextheight){G}
\rput(4.0,\searchtextheight){C}
\rput(4.5,\searchtextheight){T}
\rput(5.0,\searchtextheight){C}
\rput(5.5,\searchtextheight){C}
\rput(6.0,\searchtextheight){A}
\rput(6.5,\searchtextheight){G}
\rput(7.0,\searchtextheight){A}
\rput(7.5,\searchtextheight){C}
\rput(8.0,\searchtextheight){C}
\rput(8.5,\searchtextheight){A}
\rput(9.0,\searchtextheight){G}
\rput(9.5,\searchtextheight){T}
\rput(10.0,\searchtextheight){A}
\rput(10.5,\searchtextheight){C}
\rput(11.0,\searchtextheight){C}
\rput(11.5,\searchtextheight){C}
\rput(12.0,\searchtextheight){G}
\rput(12.5,\searchtextheight){T}
\rput(13.0,\searchtextheight){T}
\rput(13.5,\searchtextheight){C}
\rput(14.0,\searchtextheight){C}
\rput(14.5,\searchtextheight){G}
\rput(15.0,\searchtextheight){T}
\rput(15.5,\searchtextheight){G}
\rput(16.0,\searchtextheight){G}
\rput(16.5,\searchtextheight){T}
}

{
\tiny
\newcommand{\axislabelheight}{2.55}
\rput(1.5,\axislabelheight){1}
\rput(2.0,\axislabelheight){2}
\rput(2.5,\axislabelheight){3}
\rput(3.0,\axislabelheight){4}
\rput(3.5,\axislabelheight){5}
\rput(4.0,\axislabelheight){6}
\rput(4.5,\axislabelheight){7}
\rput(5.0,\axislabelheight){8}
\rput(5.5,\axislabelheight){9}
\rput(6.0,\axislabelheight){10}
\rput(6.5,\axislabelheight){11}
\rput(7.0,\axislabelheight){12}
\rput(7.5,\axislabelheight){13}
\rput(8.0,\axislabelheight){14}
\rput(8.5,\axislabelheight){15}
\rput(9.0,\axislabelheight){16}
\rput(9.5,\axislabelheight){17}
\rput(10.0,\axislabelheight){18}
\rput(10.5,\axislabelheight){19}
\rput(11.0,\axislabelheight){20}
\rput(11.5,\axislabelheight){21}
\rput(12.0,\axislabelheight){22}
\rput(12.5,\axislabelheight){23}
\rput(13.0,\axislabelheight){24}
\rput(13.5,\axislabelheight){25}
\rput(14.0,\axislabelheight){26}
\rput(14.5,\axislabelheight){27}
\rput(15.0,\axislabelheight){28}
\rput(15.5,\axislabelheight){29}
\rput(16.0,\axislabelheight){30}
\rput(16.5,\axislabelheight){31}
}

\end{pspicture}
\caption{The occurrences of the subpatterns $P_1=\mathrm{A}$, $P_2=\mathrm{CC}$ and $P_3=\mathrm{GT}$ and the ranges they define in the text $T$ from Example~\ref{ex:runningexample}. Occurrences which are not relevant are crossed out. The bold occurrences of $P_3$ are the relevant occurrences of $P_k$ and their end positions 17,28 and 31 constitute the solution to the VLG problem. Consider the point in the execution of the algorithm  when  the occurrence $x$ of $P_2$ at position $\tau=26$  is reported by the Aho-Corasick automaton. At this time $L_2=\bigl[\; [17;20],[22;23],[25;26]\; \bigl]$ and $L_3=\bigl[\; [23,28] \; \bigl]$. The ranges $[17;20]$ and $[22;23]$ are now dead and are removed from $L_2$ in step~\ref{algstep:removedead}. 
In step~\ref{algstep:relevancecheck} the algorithm determines that $x$ is relevant 
and $R(x)=[29;33]$ is appended to $L_3$: $L_3 = \bigl[\; [23;33] \; \bigl]$. 
\label{fig:runningexample}}
\end{figure}

\section{Analysis}\label{sec:analysis}
We now show that Algorithm~\ref{alg:ouralgorithm} solves the VLG problem in time $O(n\log k + m +\alpha)$ and space $O(m+A)$, implying Theorem~\ref{thm:vlgproblem}.

\subsection{Correctness}
To show that Algorithm~\ref{alg:ouralgorithm} finds exactly the relevant occurrences of $P_k$, we show by induction on $i$ that the algorithm in step~\ref{algstep:relevancecheck} correctly determines the relevancy of all occurrences of $P_i$, $i=1,2,\ldots,k$, in $T$.
 
\begin{description}
\item[Base case:] All occurrences of $P_1$ are by definition relevant and 
Algorithm~\ref{alg:ouralgorithm} correctly determines this in step~\ref{algstep:relevancecheck}.

\item[Inductive step:]
Let $y$ be an occurrence of $P_i$, $i>1$, that is reported at position $\tau$. There are two cases to consider.
\begin{enumerate}
\item $y$ is relevant.  By definition there is a relevant occurrence $x$ of $P_{i-1}$ in $T$, such that $\spos(y) = \tau - |P_i| \in R(x)$. By the induction hypothesis $x$ was correctly determined to be relevant by the algorithm. Since $\epos(x) < \tau$, $R(x)$ was appended to $L_{i}$ earlier in the execution of the algorithm. It remains to show that the range containing $\spos(y)$ is the first range in $L_i$ in step ~\ref{algstep:relevancecheck}.
When removing the dead ranges in $L_i$ in step~\ref{algstep:removedead}, all ranges $[a,b]$ where $b< \tau - |P_i|$ are  removed. Therefore the range containing $\tau -|P_i| = \spos(y)$ is the first range in $L_i$ after step~\ref{algstep:removedead}. It follows that the  algorithm correctly determines that $y$ is relevant.
\item $y$ is not relevant. Then there exists no relevant occurrence $x$ of $P_{i-1}$ such that $\spos(y) \in R(x)$. By the induction hypothesis there is no range in $L_i$ containing $\spos(y)$, since the algorithm only append ranges when a relevant occurrence is found. Consequently, the algorithm correctly determines that $y$ is not relevant.
\end{enumerate}
\ignore{

 If and only if such a relevant occurrence of $x$ exists in $T$, $L_i$ contains a range that contains $\spos(y)=\tau - |P_i|$, since the algorithm previously, by the induction hypothesis, would have appended $R(x)$ to $L_i$ if and only if $x$ was relevant. Thus $y$ is relevant iff $\spos(y)$ is contained in a range in $L_i$. It is safe to remove the dead ranges of $L_i$ in step~\ref{algstep:removedead}, since $\spos(y)=\tau-|P_i|$ can not be contained in any of these. Let $R^\prime$ be the first range in $L_i$ after the removal of dead ranges. If $\spos(y) \in R^\prime$ then $y$ is relevant. If $\spos(y) \notin R^\prime$ (or if $L_i$ is empty) then $\spos(y)$ also is not contained in any of the remaining ranges in $L_i$, since they are disjoint and sorted. Thus $y$ is relevant iff $\spos(y) \in R^\prime$ and hence the algorithm correctly determines the relevancy of $y$ in step~\ref{algstep:relevancecheck}.}
\end{description}

\subsection{Time and Space Complexity}\label{sec:complexityanalysis} 
The AC automaton for the subpatterns $P_1,P_2,\ldots,P_k$ can be built in time $O(m\log k)$ using $O(m)$ space, where $m=\sum_{i=1}^k |P_i|$. In the trivial case when $m > n$ we do not need to build the automaton. Hence, we will assume that $m \leq n$ in the following analysis. For each of the $\alpha$ occurrences of the strings $P_1, P_2, \ldots, P_k$ Algorithm~\ref{alg:ouralgorithm} first removes the dead ranges from $L_i$ and $L_{i+1}$ and performs a number of constant-time operations. Since both lists are sorted, the dead ranges can be removed by traversing the lists from the beginning. At most $\alpha$ ranges are ever added to the lists, and therefore the algorithm spends $O(\alpha)$ time in total on removing dead ranges. The total time is therefore $O((n+m)\log k+\alpha) = O(n\log k + m + \alpha)$.

To prove the space bound, we first show the following lemma.
\begin{lemma}\label{lem:listsizes}
At any time during the execution of the algorithm we have
\[
|L_i| \leq \left \lfloor \frac{2c_{i-1}+|P_i|+a_{i-1}}{c_{i-1}+1} \right \rfloor = O\left(\frac{|P_i|+a_{i-1}}{b_{i-1} - a_{i-1}+2} \right ),
\]
for $i=2,3,\ldots,k$, where $c_i=b_i-a_i+1$.
\end{lemma}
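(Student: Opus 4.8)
The plan is to bound $|L_i|$ by understanding how many disjoint ranges can simultaneously be alive in $L_i$. Recall that $L_i$ stores ranges of the form $R(x) = [\tau+a_{i-1}+1;\tau+b_{i-1}+1]$ coming from relevant occurrences $x$ of $P_{i-1}$, so each raw range has length exactly $c_{i-1} = b_{i-1}-a_{i-1}+1$, and the ranges in $L_i$ are kept sorted and merged whenever they overlap or adjoin. Since after merging the stored ranges are pairwise disjoint and separated by gaps of at least one position, two distinct ranges in $L_i$ contribute at least $c_{i-1}+1$ to the span of positions covered. The key observation I would exploit is that at any moment all ranges in $L_i$ are alive, meaning none has ended before $\tau - |P_i|$; combined with the fact that the rightmost endpoint any range can have is $\tau + b_{i-1}+1$ (an occurrence of $P_{i-1}$ reported at the current position $\tau$, shifted by its gap), all alive ranges lie within a window of bounded width.

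First I would fix the current position $\tau$ and identify the leftmost and rightmost positions that an alive range in $L_i$ can occupy. The smallest left endpoint: a range is alive iff its right endpoint $b \geq \tau - |P_i|$, so the earliest surviving range could start as far left as $\tau - |P_i| - c_{i-1} + 1$ (its left endpoint is its right endpoint minus $c_{i-1}-1$). The largest right endpoint is $\tau + b_{i-1} + 1$, achieved by a range appended from an occurrence of $P_{i-1}$ ending at $\tau$. Thus every alive range lies inside the interval $[\tau - |P_i| - c_{i-1}+1;\ \tau + b_{i-1}+1]$, whose total length is $(\tau+b_{i-1}+1) - (\tau-|P_i|-c_{i-1}+1) + 1 = |P_i| + b_{i-1} + c_{i-1} + 1$. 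Rewriting $b_{i-1} = a_{i-1}+c_{i-1}-1$ gives a window width of $|P_i| + a_{i-1} + 2c_{i-1}$, matching the numerator in the lemma.

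Next I would use the packing argument: within a window of $W$ positions, the number of pairwise-disjoint ranges each spanning at least $c_{i-1}+1$ positions of ``footprint'' (range length $c_{i-1}$ plus the mandatory one-position separation) is at most $\lfloor W/(c_{i-1}+1)\rfloor$. Being careful with the off-by-one here — whether the separator is counted for the last range or not — is exactly what produces the floor expression $\lfloor (2c_{i-1}+|P_i|+a_{i-1})/(c_{i-1}+1)\rfloor$, so I would verify the boundary case explicitly (e.g.\ ranges packed tightly with single-position gaps) to make sure the constant in the numerator is correct rather than off by one. Finally, the asymptotic simplification follows by substituting $c_{i-1}+1 = b_{i-1}-a_{i-1}+2$ into the denominator and bounding the numerator $2c_{i-1}+|P_i|+a_{i-1} = O(|P_i| + a_{i-1} + (b_{i-1}-a_{i-1}))$, where the $b_{i-1}-a_{i-1}$ term is absorbed since dividing by $b_{i-1}-a_{i-1}+2$ leaves an $O(1)$ contribution; this yields the claimed $O\bigl((|P_i|+a_{i-1})/(b_{i-1}-a_{i-1}+2)\bigr)$.

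The main obstacle I anticipate is not the counting itself but pinning down the exact extremal window precisely enough to land the stated constant. In particular, I must argue carefully that a range can genuinely persist until its right endpoint equals $\tau-|P_i|$ (so the leftmost alive range's left endpoint is exactly $\tau-|P_i|-c_{i-1}+1$) while simultaneously a freshly appended range reaches $\tau+b_{i-1}+1$, and confirm these two extremes can coexist given the merging rule; if merging forced some of these ranges together the bound could only improve, so I would note that the disjointness-plus-separation assumption gives the worst case and hence a valid upper bound. Getting the floor function exactly right — as opposed to an asymptotically equivalent but constant-loose bound — is the delicate part, and I would resolve it by an explicit tight packing example.
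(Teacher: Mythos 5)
Your core mechanism is the same one the paper uses: confine the currently relevant ranges to a window of width $2c_{i-1}+|P_i|+a_{i-1}$ and then pack disjoint ranges of length at least $c_{i-1}$, separated by at least one position, into that window. The difference is framing: the paper argues temporally (fix the first range $R(x_1)=[s,t]$ in $L_i$, observe that any further range must be appended by an occurrence of $P_{i-1}$ that ends while $R(x_1)$ is still alive, and count how many ranges fit between $t$ and the largest possible right endpoint), whereas you take a snapshot at the current position $\tau$. Two steps of your snapshot argument are, however, not correct as stated. First, the claim that every alive range lies inside $[\tau-|P_i|-c_{i-1}+1;\ \tau+b_{i-1}+1]$ is false: ranges in $L_i$ are merged when they overlap or adjoin, so a stored range can be much longer than $c_{i-1}$, and the first alive range can extend arbitrarily far to the left of your window. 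Your remark that merging "could only improve the bound" addresses the count of ranges, not this containment failure. The repair is to run the packing on right endpoints instead: consecutive ranges in $L_i$ satisfy $t_{j+1}\geq t_j+c_{i-1}+1$, and all right endpoints lie in $[\tau-|P_i|;\ \tau+b_{i-1}+1]$, which recovers the same count without any containment assumption. Second, "at any moment all ranges in $L_i$ are alive" is not literally true, because dead ranges are removed lazily (only when an occurrence of $P_{i-1}$ or $P_i$ is processed); you need the observation that $|L_i|$ grows only at append time and every append is immediately preceded by dead-range removal, so it suffices to bound the size just after an append. The paper's lifetime-of-$R(x_1)$ framing gives this for free.

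The delicate point you flagged but deferred --- landing exactly $\lfloor W/(c_{i-1}+1)\rfloor$ rather than $\lfloor (W+1)/(c_{i-1}+1)\rfloor$ --- does not resolve the way you hope: packing $\ell$ ranges of length at least $c_{i-1}$ with unit separators into $W$ consecutive positions only forces $\ell(c_{i-1}+1)-1\leq W$, i.e.\ $\ell\leq\lfloor (W+1)/(c_{i-1}+1)\rfloor$, and a tight packing example can only certify attainability of that bound, not validity of the smaller one. So your route, once repaired as above, proves the lemma with a numerator larger by one. This discrepancy is immaterial: the paper's own derivation carries the mirror-image slack (it takes the last alive position of $[s,t]$ to be $\tau_a=t+|P_i|-1$, which is in tension with its definition of dead, $b<\tau-|P_i|$, under which the last alive position is $t+|P_i|$), and the only thing used downstream is the asymptotic bound $O\bigl((|P_i|+a_{i-1})/(b_{i-1}-a_{i-1}+2)\bigr)$, which both versions deliver, so the $O(m+A)$ space bound is unaffected.
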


\begin{proof}
Consider list $L_i$ for some $i=2,\ldots,k$. Referring to Algorithm~\ref{alg:ouralgorithm}, the size of the list $L_i$ is only increased in step~\ref{algstep:appendrange}, when a range $R(x_j)$ defined by a relevant occurrence $x_j$ of $P_{i-1}$ is reported and $R(x_j)$ \emph{does not} adjoin or overlap the last range in $L_i$.

\begin{figure}[tb]
\centering
\newpsstyle{occStyle}{%
  fillstyle=vlines,hatchcolor=black!40,
  hatchwidth=0.1\pslinewidth,
  hatchsep=1\pslinewidth,
  linewidth=0.3pt
}

\newcommand{\textaxisheight}{3}
\psset{unit=0.5cm,yunit=0.8,arrowscale=1}
\begin{pspicture}[showgrid=false](1.2,1)(26,15)
\footnotesize
\psline[arrowscale=2.5,linewidth=0.8pt]{->}(1.2,\textaxisheight)(25.5,\textaxisheight)

\rput(1.25,4){\psframe[style=occStyle](3.50,1)}
{\psset{linewidth=0.3pt,linestyle=dashed,dash=1pt 1pt}
\psline{-}(3.75,5)(3.75,4)
\psline{-}(4.25,5)(4.25,4)
\rput(3,4.5){$P_{i-1}$}
\psline{-}(1.75,5)(1.75,4)
\psline{-}(2.25,5)(2.25,4)
}

\rput(11.25,10){\psframe[style=occStyle](6.50,1)}
{\psset{linewidth=0.3pt,linestyle=dashed,dash=1pt 1pt}
\psline{-}(11.75,11)(11.75,10)
\psline{-}(12.25,11)(12.25,10)
\psline{-}(12.75,11)(12.75,10)
\rput(14.5,10.5){$P_{i}$}
\psline{-}(16.25,11)(16.25,10)
\psline{-}(16.75,11)(16.75,10)
\psline{-}(17.25,11)(17.25,10)
}

\rput(14.25,8){\psframe[style=occStyle](3.50,1)}
{\psset{linewidth=0.3pt,linestyle=dashed,dash=1pt 1pt}
\psline{-}(14.75,9)(14.75,8)
\psline{-}(15.25,9)(15.25,8)
\rput(16,8.5){$P_{i-1}$}
\psline{-}(16.75,9)(16.75,8)
\psline{-}(17.25,9)(17.25,8)
}

\psline{|-|}(9.25,4.5)(11.75,4.5)\rput(10.5,5){$R(x_1)$}
\psline{|-|}(22.25,8.5)(24.75,8.5)\rput(23.5,9){$R(x_\ell)$}

{\psset{linestyle=dashed}

\psline{|-|}(12.25,5)(14.75,5)\rput(13.5,5.5){$R(x_2)$}
\psline{|-|}(15.25,6)(17.75,6)

\psline[linewidth=1pt,dash=1pt 2pt]{-}(17,6.5)(20.7,7)

\psline{|-|}(19.25,7.5)(21.75,7.5)

}

{\psset{linewidth=0.5pt}
\psline{|<->|}(11.75,14)(24.75,14)\rput*(18.25,14){$d$}
\psline{|<->|}(11.75,13)(17.75,13)\rput*(14.75,13){$|P_{i}|-1$}
\psline{|<->|}(17.75,13)(22.25,13)\rput*(20,13){$a_{i-1}$}
\psline{|<->|}(22.25,13)(24.75,13)\rput*(23.5,13){$c_{i-1}$}
\psline{|<->|}(17.75,12)(24.75,12)\rput*(20.6,12){$b_{i-1}+1$}

}

{\psset{linestyle=dashed,linewidth=0.3pt,dash=0.5pt 1pt}
\psline{-}(4.5,4)(4.5,3)
\psline{-}(11.75,14)(11.75,11)\psline{-}(11.75,10)(11.75,4.5)
\psline{-}(17.75,13)(17.75,11)\psline{-}(17.75,10)(17.75,9)\psline{-}(17.75,8)(17.75,3)
\psline{-}(22.25,13)(22.25,8.5)
\psline{-}(24.75,14)(24.75,8.5)

}

{
\psline{-}(4.5,2.8)(4.5,3.2)
\psline{-}(11.75,2.8)(11.75,3.2)
\psline{-}(12.25,2.8)(12.25,3.2)
\psline{-}(17.5,2.8)(17.5,3.2)

}

{
\rput[t](4.5,2.5){\parbox{2.9cm}{$x_1$ is reported and $R(x_1)$ is added to $L_i$}}
\rput[t](17.5,2.5){\parbox{2.8cm}{Last position where $R(x_1)$ is still alive}}
\rput{0}(12,2.3){$\underbrace{}_1$}
\rput(24,2.2){\scriptsize Position in $T$}
}
{\psset{linewidth=0.03,arrowscale=1.5}
\psline{->}(2,8)(3,5.3)\rput(1.9,8.35){$x_1$}
\psline{->}(8.1,7)(13.8,8.5)\rput(7.7,6.9){$x_\ell$}
}
\end{pspicture}

\caption{The worst-case situation where $\ell$, the maximum number of ranges are present in $L_i$. The figure only shows the first and the last occurrence of $P_{i-1}$ ($x_1$ and $x_\ell$) defining the $\ell$ ranges.\label{fig:listsizes}}
\end{figure}
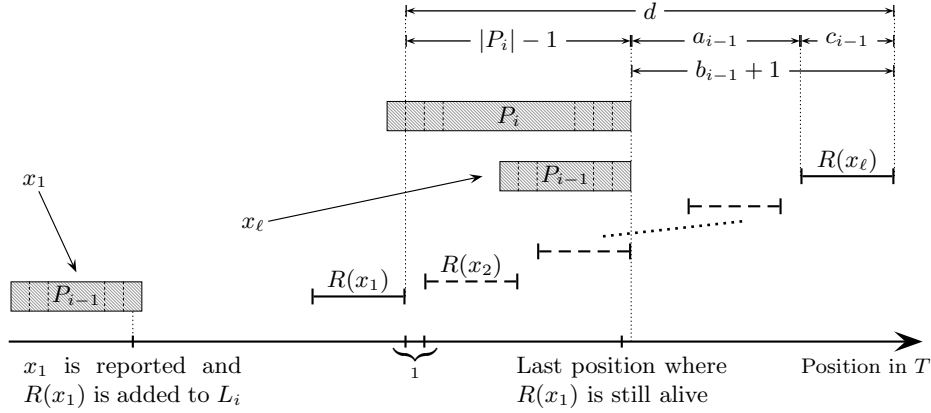

Let $R(x_1)=[s,t]$ be the first range in $L_i$ at an arbitrary time in the execution of the algorithm. We bound the number of additional ranges that can be added to $L_i$ from the time $R(x_1)$ became the first range in $L_i$ until $R(x_1)$ is removed. The last position where $R(x_1)$ is still alive is $\tau_a=t+|P_i|-1$. If a relevant occurrence $x_\ell$ of $P_{i-1}$ ends at this position, then the range $R(x_\ell)=[\tau_a+a_{i-1}+1;\tau_a+b_{i-1}+1]$ is appended to $L_i$. Hence, the maximum number of positions $d$ from $t$ to the end of $R(x_\ell)$ is
\begin{align*}
d&=\tau_a+b_{i-1}+1-t \\
 &=(t+|P_i|-1)+b_{i-1}+1-t \\
 &=|P_i|+b_{i-1} \\
 &=|P_i|+a_{i-1}+c_{i-1}-1 \enspace.
\end{align*}
In the worst case, all the ranges in $L_i$ are separated by exactly one position as illustrated in Fig.~\ref{fig:listsizes}. Therefore at most $\lfloor d / (c_{i-1}+1) \rfloor$ additional ranges can be added to $L_i$ before $R(x_1)$ is removed. Counting in $R(x_1)$ yields the following bound on the size of $L_i$
\[
|L_i| \leq \left \lfloor \frac{d}{c_{i-1}+1} \right \rfloor+1 = \left \lfloor \frac{2c_{i-1}+|P_i|+a_{i-1}}{c_{i-1}+1} \right \rfloor = O\left(\frac{|P_i|+a_{i-1}}{b_{i-1}-a_{i-1}+2}\right) \enspace .
\]
\qed

\end{proof}
By Lemma~\ref{lem:listsizes} the total number of ranges stored at any time during the processing of $T$ is at most
\[
O\left(\sum_{i=2}^k \frac{|P_i|+a_{i-1}}{b_{i-1}-a_{i-1}+2}\right)  = O\left(\;\sum_{i=1}^{k-1} \frac{|P_{i+1}|}{b_{i}-a_{i}+2}+\sum_{i=1}^{k-1} \frac{a_{i}}{{b_{i}-a_{i}+2}} \right ) = O\left( m + A \right)\enspace .
\]

Each range can be stored using $O(1)$ space, so this is an upper bound on the space needed to store the lists $L_2,\ldots,L_k$. The AC-automaton uses $O(m)$ space, so the total space required by our algorithm is $O(m+A)$.

In summary, the algorithm uses $O(n\log k + m + \alpha)$ time and $O(m + A)$ space. This completes the proof of Theorem~\ref{thm:vlgproblem}.

\section{Complete Characterization of Occurrences}
In this section we show how our algorithm can be extended to report not only the end position of $P_k$, but also the positions of $P_1,P_2,\ldots,P_{k-1}$ for each occurrence of $P$ in $T$.

The main idea is to construct a graph that encodes all occurrences of the VLG-pattern using $O(\alpha)$ space. For each occurrence of the VLG-pattern, the positions of the individual subpatterns can be reported by traversing this graph. This approach was also used by Rahman~et~al.~\cite{RILMS2006} and Morgante~et~al.~\cite{MPVZ2005}. We give a fast new algorithm for constructing this graph and show a black-box solution that can report the occurrences of the VLG-pattern without storing the complete graph.

We introduce the following simple definitions. If $P$ occurs in the text $T$, then a \emph{match combination} is a sequence $e_1, \ldots, e_k$ of end positions of $P_1, \ldots, P_k$ in $T$ corresponding to the match. The total number of match combinations of $P$ in $T$ is denoted $\beta$. Note that there can be many match combinations corresponding to a single match. See Fig.~\ref{fig:matchcombinations}.

\begin{figure}[H]
\centering
\newpsstyle{occStyle}{%
  fillstyle=vlines,hatchcolor=black!40,
  hatchwidth=0.1\pslinewidth,
  hatchsep=1\pslinewidth,
  linewidth=0.3pt
}

\psset{unit=0.5cm,yunit=0.75,xunit=1.5,arrowscale=1}
\begin{pspicture}[showgrid=false](1.2,2)(18,9)
\footnotesize
\psline[arrowscale=2.5,linewidth=0.8pt]{->}(1.2,3)(17.5,3)

{\bfseries
\newcommand{\searchtextheight}{3.55}
{\color{remTextColor}
\rput(1.5,\searchtextheight){A}
\rput(2.0,\searchtextheight){T}
\rput(2.5,\searchtextheight){C}
\rput(3.0,\searchtextheight){G}
}
\rput(3.5,\searchtextheight){G}
\rput(4.0,\searchtextheight){C}
\rput(4.5,\searchtextheight){T}
\rput(5.0,\searchtextheight){C}
\rput(5.5,\searchtextheight){C}
\rput(6.0,\searchtextheight){A}
\rput(6.5,\searchtextheight){G}
\rput(7.0,\searchtextheight){A}
\rput(7.5,\searchtextheight){C}
\rput(8.0,\searchtextheight){C}
\rput(8.5,\searchtextheight){A}
\rput(9.0,\searchtextheight){G}
\rput(9.5,\searchtextheight){T}
{\color{remTextColor}
\rput(10.0,\searchtextheight){A}
\rput(10.5,\searchtextheight){C}
\rput(11.0,\searchtextheight){C}
\rput(11.5,\searchtextheight){C}
\rput(12.0,\searchtextheight){G}
\rput(12.5,\searchtextheight){T}
\rput(13.0,\searchtextheight){T}
\rput(13.5,\searchtextheight){C}
\rput(14.0,\searchtextheight){C}
\rput(14.5,\searchtextheight){G}
\rput(15.0,\searchtextheight){T}
\rput(15.5,\searchtextheight){G}
\rput(16.0,\searchtextheight){G}
\rput(16.5,\searchtextheight){T}
}
}

{
\tiny
\newcommand{\axislabelheight}{2.55}
\rput(1.5,\axislabelheight){1}
\rput(2.0,\axislabelheight){2}
\rput(2.5,\axislabelheight){3}
\rput(3.0,\axislabelheight){4}
\rput(3.5,\axislabelheight){5}
\rput(4.0,\axislabelheight){6}
\rput(4.5,\axislabelheight){7}
\rput(5.0,\axislabelheight){8}
\rput(5.5,\axislabelheight){9}
\rput(6.0,\axislabelheight){10}
\rput(6.5,\axislabelheight){11}
\rput(7.0,\axislabelheight){12}
\rput(7.5,\axislabelheight){13}
\rput(8.0,\axislabelheight){14}
\rput(8.5,\axislabelheight){15}
\rput(9.0,\axislabelheight){16}
\rput(9.5,\axislabelheight){17}
\rput(10.0,\axislabelheight){18}
\rput(10.5,\axislabelheight){19}
\rput(11.0,\axislabelheight){20}
\rput(11.5,\axislabelheight){21}
\rput(12.0,\axislabelheight){22}
\rput(12.5,\axislabelheight){23}
\rput(13.0,\axislabelheight){24}
\rput(13.5,\axislabelheight){25}
\rput(14.0,\axislabelheight){26}
\rput(14.5,\axislabelheight){27}
\rput(15.0,\axislabelheight){28}
\rput(15.5,\axislabelheight){29}
\rput(16.0,\axislabelheight){30}
\rput(16.5,\axislabelheight){31}
}

{\bfseries
\rput(3.5,4.5){G}
\rput(4.0,4.5){C}
\rput(4.5,4.5){-}
\rput(5.0,4.5){-}
\rput(5.5,4.5){-}
\rput(6.0,4.5){A}
\rput(6.5,4.5){-}
\rput(7.0,4.5){-}
\rput(7.5,4.5){-}
\rput(8.0,4.5){-}
\rput(8.5,4.5){-}
\rput(9.0,4.5){-}
\rput(9.5,4.5){T}
}

{\bfseries
\rput(3.5,5.5){G}
\rput(4.0,5.5){C}
\rput(4.5,5.5){-}
\rput(5.0,5.5){-}
\rput(5.5,5.5){-}
\rput(6.0,5.5){-}
\rput(6.5,5.5){-}
\rput(7.0,5.5){A}
\rput(7.5,5.5){-}
\rput(8.0,5.5){-}
\rput(8.5,5.5){-}
\rput(9.0,5.5){-}
\rput(9.5,5.5){T}
}

{\bfseries
\rput(3.5,6.5){G}
\rput(4.0,6.5){-}
\rput(4.5,6.5){-}
\rput(5.0,6.5){C}
\rput(5.5,6.5){-}
\rput(6.0,6.5){A}
\rput(6.5,6.5){-}
\rput(7.0,6.5){-}
\rput(7.5,6.5){-}
\rput(8.0,6.5){-}
\rput(8.5,6.5){-}
\rput(9.0,6.5){-}
\rput(9.5,6.5){T}
}

{\bfseries
\rput(3.5,7.5){G}
\rput(4.0,7.5){-}
\rput(4.5,7.5){-}
\rput(5.0,7.5){C}
\rput(5.5,7.5){-}
\rput(6.0,7.5){-}
\rput(6.5,7.5){-}
\rput(7.0,7.5){A}
\rput(7.5,7.5){-}
\rput(8.0,7.5){-}
\rput(8.5,7.5){-}
\rput(9.0,7.5){-}
\rput(9.5,7.5){T}
}

{\bfseries
\rput(3.5,8.5){G}
\rput(4.0,8.5){-}
\rput(4.5,8.5){-}
\rput(5.0,8.5){-}
\rput(5.5,8.5){C}
\rput(6.0,8.5){-}
\rput(6.5,8.5){-}
\rput(7.0,8.5){A}
\rput(7.5,8.5){-}
\rput(8.0,8.5){-}
\rput(8.5,8.5){-}
\rput(9.0,8.5){-}
\rput(9.5,8.5){T}
}

\psbrace[ref=l](9.5,4)(9.5,9){\scalefont{0.85}{The five match combinations.}}

\end{pspicture}
\caption{The text sequence is the same as in the previous examples. The substring $S$ from position 5 to 17 (highlighted in bold) matches the VLG-pattern $Q=G \cdot \gap{0}{3} \cdot C \cdot \gap{1}{6} \cdot A \cdot \gap{2}{7} \cdot T$. As the figure shows, this match contains the following five match combinations: [5,9,12,17],[5,8,12,17],[5,8,10,17],[5,6,12,17],[5,6,10,17].\label{fig:matchcombinations}}
\end{figure}

\noindent Due to the inequality of arithmetic and geometric means, the total number of match combinations $\beta$ is maximized, when the $\alpha$ occurrences are distributed evenly over $P_1,P_2,\ldots,P_k$ and each occurrence of $P_{i}$ is compatible to all occurrences of $P_{i-1}$ for $i=2,\ldots,k$. So in the worst case $\beta = \Theta \left((\frac{\alpha}{k})^k \right)$, which is exponential in the number of gaps. All these match combinations can be encoded in a directed graph using $O(\frac{\alpha^2}{k})$ space as follows. The nodes in the graph are the relevant occurrences of $P_1,P_2,\ldots,P_k$ in $T$. Two nodes $x$ of $P_{i-1}$ and $y$ of $P_{i}$ are connected by an edge from $y$ to $
x$ if and only if $startpos(y) \in R(x)$. In that case we also say that $x$ and $y$ are \emph{compatible}. We denote this graph as the \emph{gap graph} for $P$ and $T$. See Fig.~\ref{fig:explicitgapgraph}. Since the number of nodes in the gap graph is at most $\alpha$, and there are $O\left((\frac{\alpha}{k})^2\right)$ edges between the $k$ layers in the worst case, we can store the graph using $O(\frac{\alpha^2}{k})$ space.

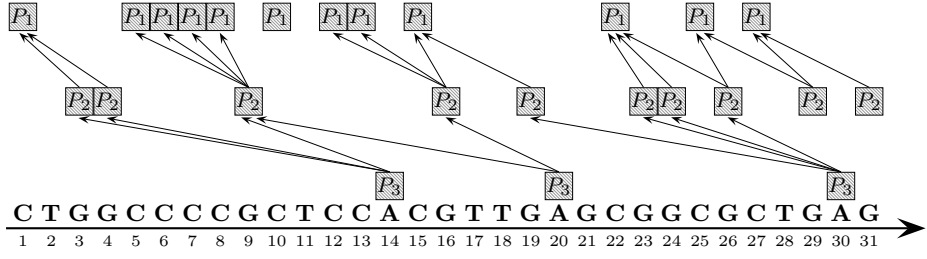
\begin{figure}[H]
\centering
\newpsstyle{occStyle}{%
  fillstyle=vlines,hatchcolor=black!40,
  hatchwidth=0.1\pslinewidth,
  hatchsep=1\pslinewidth,
  linewidth=0.3pt
}

\psset{unit=0.5cm,yunit=0.75,xunit=1.5,arrowscale=1}
\begin{pspicture}[showgrid=false](1.2,2)(18,11)
\footnotesize
\psline[arrowscale=2.5,linewidth=0.8pt]{->}(1.2,3)(17.5,3)

{\bfseries
\newcommand{\searchtextheight}{3.55}
\rput(1.5,\searchtextheight){C}
\rput(2.0,\searchtextheight){T}
\rput(2.5,\searchtextheight){G}
\rput(3.0,\searchtextheight){G}
\rput(3.5,\searchtextheight){C}
\rput(4.0,\searchtextheight){C}
\rput(4.5,\searchtextheight){C}
\rput(5.0,\searchtextheight){C}
\rput(5.5,\searchtextheight){G}
\rput(6.0,\searchtextheight){C}
\rput(6.5,\searchtextheight){T}
\rput(7.0,\searchtextheight){C}
\rput(7.5,\searchtextheight){C}
\rput(8.0,\searchtextheight){A}
\rput(8.5,\searchtextheight){C}
\rput(9.0,\searchtextheight){G}
\rput(9.5,\searchtextheight){T}
\rput(10.0,\searchtextheight){T}
\rput(10.5,\searchtextheight){G}
\rput(11.0,\searchtextheight){A}
\rput(11.5,\searchtextheight){G}
\rput(12.0,\searchtextheight){C}
\rput(12.5,\searchtextheight){G}
\rput(13.0,\searchtextheight){G}
\rput(13.5,\searchtextheight){C}
\rput(14.0,\searchtextheight){G}
\rput(14.5,\searchtextheight){C}
\rput(15.0,\searchtextheight){T}
\rput(15.5,\searchtextheight){G}
\rput(16.0,\searchtextheight){A}
\rput(16.5,\searchtextheight){G}
}

{
\tiny
\newcommand{\axislabelheight}{2.55}
\rput(1.5,\axislabelheight){1}
\rput(2.0,\axislabelheight){2}
\rput(2.5,\axislabelheight){3}
\rput(3.0,\axislabelheight){4}
\rput(3.5,\axislabelheight){5}
\rput(4.0,\axislabelheight){6}
\rput(4.5,\axislabelheight){7}
\rput(5.0,\axislabelheight){8}
\rput(5.5,\axislabelheight){9}
\rput(6.0,\axislabelheight){10}
\rput(6.5,\axislabelheight){11}
\rput(7.0,\axislabelheight){12}
\rput(7.5,\axislabelheight){13}
\rput(8.0,\axislabelheight){14}
\rput(8.5,\axislabelheight){15}
\rput(9.0,\axislabelheight){16}
\rput(9.5,\axislabelheight){17}
\rput(10.0,\axislabelheight){18}
\rput(10.5,\axislabelheight){19}
\rput(11.0,\axislabelheight){20}
\rput(11.5,\axislabelheight){21}
\rput(12.0,\axislabelheight){22}
\rput(12.5,\axislabelheight){23}
\rput(13.0,\axislabelheight){24}
\rput(13.5,\axislabelheight){25}
\rput(14.0,\axislabelheight){26}
\rput(14.5,\axislabelheight){27}
\rput(15.0,\axislabelheight){28}
\rput(15.5,\axislabelheight){29}
\rput(16.0,\axislabelheight){30}
\rput(16.5,\axislabelheight){31}
}

{\scriptsize

\rput(14.25,10){\psframe[style=occStyle](0.5,1)}
\rput(14.5,10.5){$P_1$}
\rput(14.5,10){\rnode{p112}{}}

\rput(13.25,10){\psframe[style=occStyle](0.5,1)}
\rput(13.5,10.5){$P_1$}
\rput(13.5,10){\rnode{p111}{}}

\rput(11.75,10){\psframe[style=occStyle](0.5,1)}
\rput(12,10.5){$P_1$}
\rput(12,10){\rnode{p110}{}}

\rput(8.25,10){\psframe[style=occStyle](0.5,1)}
\rput(8.5,10.5){$P_1$}
\rput(8.5,10){\rnode{p19}{}}

\rput(7.25,10){\psframe[style=occStyle](0.5,1)}
\rput(7.5,10.5){$P_1$}
\rput(7.5,10){\rnode{p18}{}}

\rput(6.75,10){\psframe[style=occStyle](0.5,1)}
\rput(7,10.5){$P_1$}
\rput(7,10){\rnode{p17}{}}

\rput(5.75,10){\psframe[style=occStyle](0.5,1)}
\rput(6,10.5){$P_1$}
\rput(6,10){\rnode{p16}{}}

\rput(4.75,10){\psframe[style=occStyle](0.5,1)}
\rput(5,10.5){$P_1$}
\rput(5,10){\rnode{p15}{}}

\rput(4.25,10){\psframe[style=occStyle](0.5,1)}
\rput(4.5,10.5){$P_1$}
\rput(4.5,10){\rnode{p14}{}}

\rput(3.75,10){\psframe[style=occStyle](0.5,1)}
\rput(4,10.5){$P_1$}
\rput(4,10){\rnode{p13}{}}

\rput(3.25,10){\psframe[style=occStyle](0.5,1)}
\rput(3.5,10.5){$P_1$}
\rput(3.5,10){\rnode{p12}{}}

\rput(1.25,10){\psframe[style=occStyle](0.5,1)}
\rput(1.5,10.5){$P_1$}
\rput(1.5,10){\rnode{p11}{}}

\rput(16.25,7){\psframe[style=occStyle](0.5,1)}
\rput(16.5,7.5){$P_2$}
\rput(16.5,7){\rnode{p210b}{}}
\rput(16.5,8){\rnode{p210t}{}}

\rput(15.25,7){\psframe[style=occStyle](0.5,1)}
\rput(15.5,7.5){$P_2$}
\rput(15.5,7){\rnode{p29b}{}}
\rput(15.5,8){\rnode{p29t}{}}

\rput(13.75,7){\psframe[style=occStyle](0.5,1)}
\rput(14,7.5){$P_2$}
\rput(14,7){\rnode{p28b}{}}
\rput(14,8){\rnode{p28t}{}}

\rput(12.75,7){\psframe[style=occStyle](0.5,1)}
\rput(13,7.5){$P_2$}
\rput(13,7){\rnode{p27b}{}}
\rput(13,8){\rnode{p27t}{}}

\rput(12.25,7){\psframe[style=occStyle](0.5,1)}
\rput(12.5,7.5){$P_2$}
\rput(12.5,7){\rnode{p26b}{}}
\rput(12.5,8){\rnode{p26t}{}}

\rput(10.25,7){\psframe[style=occStyle](0.5,1)}
\rput(10.5,7.5){$P_2$}
\rput(10.5,7){\rnode{p25b}{}}
\rput(10.5,8){\rnode{p25t}{}}

\rput(8.75,7){\psframe[style=occStyle](0.5,1)}
\rput(9,7.5){$P_2$}
\rput(9,7){\rnode{p24b}{}}
\rput(9,8){\rnode{p24t}{}}

\rput(5.25,7){\psframe[style=occStyle](0.5,1)}
\rput(5.5,7.5){$P_2$}
\rput(5.5,7){\rnode{p23b}{}}
\rput(5.5,8){\rnode{p23t}{}}

\rput(2.75,7){\psframe[style=occStyle](0.5,1)}
\rput(3,7.5){$P_2$}
\rput(3,7){\rnode{p22b}{}}
\rput(3,8){\rnode{p22t}{}}

\rput(2.25,7){\psframe[style=occStyle](0.5,1)}
\rput(2.5,7.5){$P_2$}
\rput(2.5,7){\rnode{p21b}{}}
\rput(2.5,8){\rnode{p21t}{}}

\rput(15.75,4){\psframe[style=occStyle](0.5,1)}
\rput(16,4.5){$P_3$}
\rput(16,5){\rnode{p33}{}}

\rput(10.75,4){\psframe[style=occStyle](0.5,1)}
\rput(11,4.5){$P_3$}
\rput(11,5){\rnode{p32}{}}

\rput(7.75,4){\psframe[style=occStyle](0.5,1)}
\rput(8,4.5){$P_3$}
\rput(8,5){\rnode{p31}{}}
}

{\psset{linewidth=0.3pt,arrowscale=1.4,angleB=90,angleA=-20,offsetB=-0.05}
 \SpecialCoor

 \psline{->}(p21t)([nodesep=-0.1,angle=0,offset=-0.05]p11)
 \psline{->}(p22t)([nodesep=0.1,angle=0,offset=-0.05]p11)

 \psline{->}(p23t)([offset=-0.05]p12)
 \psline{->}(p23t)([offset=-0.05]p13)
 \psline{->}(p23t)([offset=-0.05]p14)
 \psline{->}(p23t)([offset=-0.05]p15)

 \psline{->}(p24t)([offset=-0.05]p17)
 \psline{->}(p24t)([offset=-0.05]p18)
 \psline{->}(p24t)([nodesep=-0.1,angle=0,offset=-0.05]p19)

 \psline{->}(p25t)([nodesep=0.1,angle=0,offset=-0.05]p19)

 \psline{->}(p26t)([nodesep=-0.2,angle=0,offset=-0.05]p110)

 \psline{->}(p27t)([nodesep=0,angle=0,offset=-0.05]p110)

 \psline{->}(p28t)([nodesep=0.2,angle=0,offset=-0.05]p110)
 \psline{->}(p28t)([nodesep=-0.1,angle=0,offset=-0.05]p111)

 \psline{->}(p29t)([nodesep=0.1,angle=0,offset=-0.05]p111)
 \psline{->}(p29t)([nodesep=-0.1,angle=0,offset=-0.05]p112)

 \psline{->}(p210t)([nodesep=0.1,angle=0,offset=-0.05]p112)

 \psline{->}(p31)([offset=-0.07]p21b)
 \psline{->}(p31)([offset=-0.07]p22b)
 \psline{->}(p31)([nodesep=-0.2,angle=0,offset=-0.05]p23b)

 \psline{->}(p32)([nodesep=0.2,angle=0,offset=-0.07]p23b)
 \psline{->}(p32)([offset=-0.05]p24b)

 \psline{->}(p33)([offset=-0.07]p25b)
 \psline{->}(p33)([offset=-0.05]p26b)
 \psline{->}(p33)([offset=-0.05]p27b)
 \psline{->}(p33)([offset=-0.05]p28b)







 }
\end{pspicture}

\caption{The gap graph for the VLG-pattern $R=C \cdot \gap{0}{3} \cdot G\cdot \gap{3}{10} \cdot A$ and the text $T=\textrm{CTGGCCCCGCTCCACGTTGAGCGGCGCTGAG}$.\label{fig:explicitgapgraph}}
\end{figure}
\noindent If the $j$ occurrences $x_1,x_2,\ldots,x_j$ of $P_{i}$ (appearing in that order in $T$) are all compatible with the same occurrence $y$ of $P_{i+1}$, then the $j$ edges $(y,x_1),(y,x_2),\ldots,(y,x_j)$ are all present in the gap graph. Due to the following lemma, the edges $(y,x_2),\ldots,(y,x_{j-1})$ are redundant.

\begin{lemma}
Let $x_1$ and $x_2$ be two occurrences of $P_i$, $i=1,\ldots,k-1$, both compatible with the same occurrence $y$ of $P_{i+1}$. Assume without loss of generality that $startpos(x_1) < startpos(x_2)$ and let $x^\prime$ be another occurrence of $P_{i}$ such that $startpos(x_1) \leq startpos(x^\prime) \leq startpos(x_2)$, then $x^\prime$ is also compatible with $y$.
\end{lemma}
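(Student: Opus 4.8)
The plan is to reduce everything to a simple monotonicity property of the range $R(\cdot)$ and then sandwich $\spos(y)$ between the appropriate endpoints. First I would record the one structural fact that makes the argument work: since $x_1$, $x'$ and $x_2$ are all occurrences of the \emph{same} string $P_i$, they have identical length $|P_i|$, so $\epos(x) = \spos(x) + |P_i| - 1$ for each of them. Consequently the hypothesis $\spos(x_1) \le \spos(x') \le \spos(x_2)$ is equivalent to $\epos(x_1) \le \epos(x') \le \epos(x_2)$.

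Next I would unfold the definition of compatibility. By definition $x$ and $y$ are compatible iff $\spos(y) \in R(x) = [\epos(x)+a_i+1;\,\epos(x)+b_i+1]$. Writing $\tau_1 = \epos(x_1)$, $\tau' = \epos(x')$, $\tau_2 = \epos(x_2)$ and $s = \spos(y)$, the two hypotheses become $\tau_1 + a_i + 1 \le s \le \tau_1 + b_i + 1$ and $\tau_2 + a_i + 1 \le s \le \tau_2 + b_i + 1$, while the goal is to establish $\tau' + a_i + 1 \le s \le \tau' + b_i + 1$.

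The heart of the argument is that both endpoints of $R(x)$ are increasing functions of $\epos(x)$, so the interval simply slides to the right as the occurrence ends later. For the lower bound of the goal I would use $x_2$: since $\tau' \le \tau_2$ we get $\tau' + a_i + 1 \le \tau_2 + a_i + 1 \le s$. For the upper bound I would use $x_1$: since $\tau_1 \le \tau'$ we get $s \le \tau_1 + b_i + 1 \le \tau' + b_i + 1$. Combining the two inequalities gives $s \in R(x')$, i.e.\ $x'$ is compatible with $y$.

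I do not expect any real obstacle here: the statement is essentially the observation that a point lying in two translates $R(x_1)$ and $R(x_2)$ of a fixed-length interval also lies in every translate $R(x')$ situated between them. The only place to be careful is the initial passage from start positions to end positions, which relies crucially on $x_1$, $x'$, $x_2$ all being occurrences of the same pattern and hence equally long; without that, the monotone-sliding picture would not be available and the clean sandwiching step would break down.
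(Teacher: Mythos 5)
Your proof is correct and follows essentially the same route as the paper's: the paper states the containment $R(x_1) \cap R(x_2) \subseteq R(x^\prime)$ and concludes $\spos(y) \in R(x^\prime)$, while you unpack that same containment into its two defining inequalities (using $x_2$ for the lower endpoint and $x_1$ for the upper endpoint). Your explicit remark that equal lengths of occurrences of $P_i$ let one pass from start-position ordering to end-position ordering is a detail the paper leaves implicit, but the argument is the same.
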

\begin{proof}
Since $startpos(y) \in R(x_1)$ and $startpos(y) \in R(x_2)$, we have that $startpos(y) \in R(x_1) \cap R(x_2)$. Furthermore since $startpos(x_1) \leq startpos(x^\prime) \leq startpos(x_2)$, it holds that $R(x_1) \cap R(x_2) \subseteq R(x^\prime)$, so $startpos(y) \in R(x^\prime)$.\qed
\end{proof}

\noindent Leaving out the redundant edges in the gap graph, we get a new graph, which we denote the \emph{implicit gap graph}. For an example, see Fig.~\ref{fig:implicitgapgraph}. In this graph the out-degree of each node is at most two, so the number of edges is now linear in the number of nodes, and consequently we can store the implicit gap graph using $O(\alpha)$ space.
\begin{figure}[H]
\centering
\newpsstyle{occStyle}{%
  fillstyle=vlines,hatchcolor=black!40,
  hatchwidth=0.1\pslinewidth,
  hatchsep=1\pslinewidth,
  linewidth=0.3pt
}

\psset{unit=0.5cm,yunit=0.75,xunit=1.5,arrowscale=1}
\begin{pspicture}[showgrid=false](1.2,2)(18,11)
\footnotesize
\psline[arrowscale=2.5,linewidth=0.8pt]{->}(1.2,3)(17.5,3)

{\bfseries
\newcommand{\searchtextheight}{3.55}
\rput(1.5,\searchtextheight){C}
\rput(2.0,\searchtextheight){T}
\rput(2.5,\searchtextheight){G}
\rput(3.0,\searchtextheight){G}
\rput(3.5,\searchtextheight){C}
\rput(4.0,\searchtextheight){C}
\rput(4.5,\searchtextheight){C}
\rput(5.0,\searchtextheight){C}
\rput(5.5,\searchtextheight){G}
\rput(6.0,\searchtextheight){C}
\rput(6.5,\searchtextheight){T}
\rput(7.0,\searchtextheight){C}
\rput(7.5,\searchtextheight){C}
\rput(8.0,\searchtextheight){A}
\rput(8.5,\searchtextheight){C}
\rput(9.0,\searchtextheight){G}
\rput(9.5,\searchtextheight){T}
\rput(10.0,\searchtextheight){T}
\rput(10.5,\searchtextheight){G}
\rput(11.0,\searchtextheight){A}
\rput(11.5,\searchtextheight){G}
\rput(12.0,\searchtextheight){C}
\rput(12.5,\searchtextheight){G}
\rput(13.0,\searchtextheight){G}
\rput(13.5,\searchtextheight){C}
\rput(14.0,\searchtextheight){G}
\rput(14.5,\searchtextheight){C}
\rput(15.0,\searchtextheight){T}
\rput(15.5,\searchtextheight){G}
\rput(16.0,\searchtextheight){A}
\rput(16.5,\searchtextheight){G}
}

{
\tiny
\newcommand{\axislabelheight}{2.55}
\rput(1.5,\axislabelheight){1}
\rput(2.0,\axislabelheight){2}
\rput(2.5,\axislabelheight){3}
\rput(3.0,\axislabelheight){4}
\rput(3.5,\axislabelheight){5}
\rput(4.0,\axislabelheight){6}
\rput(4.5,\axislabelheight){7}
\rput(5.0,\axislabelheight){8}
\rput(5.5,\axislabelheight){9}
\rput(6.0,\axislabelheight){10}
\rput(6.5,\axislabelheight){11}
\rput(7.0,\axislabelheight){12}
\rput(7.5,\axislabelheight){13}
\rput(8.0,\axislabelheight){14}
\rput(8.5,\axislabelheight){15}
\rput(9.0,\axislabelheight){16}
\rput(9.5,\axislabelheight){17}
\rput(10.0,\axislabelheight){18}
\rput(10.5,\axislabelheight){19}
\rput(11.0,\axislabelheight){20}
\rput(11.5,\axislabelheight){21}
\rput(12.0,\axislabelheight){22}
\rput(12.5,\axislabelheight){23}
\rput(13.0,\axislabelheight){24}
\rput(13.5,\axislabelheight){25}
\rput(14.0,\axislabelheight){26}
\rput(14.5,\axislabelheight){27}
\rput(15.0,\axislabelheight){28}
\rput(15.5,\axislabelheight){29}
\rput(16.0,\axislabelheight){30}
\rput(16.5,\axislabelheight){31}
}

{\scriptsize

\rput(14.25,10){\psframe[style=occStyle](0.5,1)}
\rput(14.5,10.5){$P_1$}
\rput(14.5,10){\rnode{p112}{}}

\rput(13.25,10){\psframe[style=occStyle](0.5,1)}
\rput(13.5,10.5){$P_1$}
\rput(13.5,10){\rnode{p111}{}}

\rput(11.75,10){\psframe[style=occStyle](0.5,1)}
\rput(12,10.5){$P_1$}
\rput(12,10){\rnode{p110}{}}

\rput(8.25,10){\psframe[style=occStyle](0.5,1)}
\rput(8.5,10.5){$P_1$}
\rput(8.5,10){\rnode{p19}{}}

\rput(7.25,10){\psframe[style=occStyle](0.5,1)}
\rput(7.5,10.5){$P_1$}
\rput(7.5,10){\rnode{p18}{}}

\rput(6.75,10){\psframe[style=occStyle](0.5,1)}
\rput(7,10.5){$P_1$}
\rput(7,10){\rnode{p17}{}}

\rput(5.75,10){\psframe[style=occStyle](0.5,1)}
\rput(6,10.5){$P_1$}
\rput(6,10){\rnode{p16}{}}

\rput(4.75,10){\psframe[style=occStyle](0.5,1)}
\rput(5,10.5){$P_1$}
\rput(5,10){\rnode{p15}{}}

\rput(4.25,10){\psframe[style=occStyle](0.5,1)}
\rput(4.5,10.5){$P_1$}
\rput(4.5,10){\rnode{p14}{}}

\rput(3.75,10){\psframe[style=occStyle](0.5,1)}
\rput(4,10.5){$P_1$}
\rput(4,10){\rnode{p13}{}}

\rput(3.25,10){\psframe[style=occStyle](0.5,1)}
\rput(3.5,10.5){$P_1$}
\rput(3.5,10){\rnode{p12}{}}

\rput(1.25,10){\psframe[style=occStyle](0.5,1)}
\rput(1.5,10.5){$P_1$}
\rput(1.5,10){\rnode{p11}{}}

\rput(16.25,7){\psframe[style=occStyle](0.5,1)}
\rput(16.5,7.5){$P_2$}
\rput(16.5,7){\rnode{p210b}{}}
\rput(16.5,8){\rnode{p210t}{}}

\rput(15.25,7){\psframe[style=occStyle](0.5,1)}
\rput(15.5,7.5){$P_2$}
\rput(15.5,7){\rnode{p29b}{}}
\rput(15.5,8){\rnode{p29t}{}}

\rput(13.75,7){\psframe[style=occStyle](0.5,1)}
\rput(14,7.5){$P_2$}
\rput(14,7){\rnode{p28b}{}}
\rput(14,8){\rnode{p28t}{}}

\rput(12.75,7){\psframe[style=occStyle](0.5,1)}
\rput(13,7.5){$P_2$}
\rput(13,7){\rnode{p27b}{}}
\rput(13,8){\rnode{p27t}{}}

\rput(12.25,7){\psframe[style=occStyle](0.5,1)}
\rput(12.5,7.5){$P_2$}
\rput(12.5,7){\rnode{p26b}{}}
\rput(12.5,8){\rnode{p26t}{}}

\rput(10.25,7){\psframe[style=occStyle](0.5,1)}
\rput(10.5,7.5){$P_2$}
\rput(10.5,7){\rnode{p25b}{}}
\rput(10.5,8){\rnode{p25t}{}}

\rput(8.75,7){\psframe[style=occStyle](0.5,1)}
\rput(9,7.5){$P_2$}
\rput(9,7){\rnode{p24b}{}}
\rput(9,8){\rnode{p24t}{}}

\rput(5.25,7){\psframe[style=occStyle](0.5,1)}
\rput(5.5,7.5){$P_2$}
\rput(5.5,7){\rnode{p23b}{}}
\rput(5.5,8){\rnode{p23t}{}}

\rput(2.75,7){\psframe[style=occStyle](0.5,1)}
\rput(3,7.5){$P_2$}
\rput(3,7){\rnode{p22b}{}}
\rput(3,8){\rnode{p22t}{}}

\rput(2.25,7){\psframe[style=occStyle](0.5,1)}
\rput(2.5,7.5){$P_2$}
\rput(2.5,7){\rnode{p21b}{}}
\rput(2.5,8){\rnode{p21t}{}}

\rput(15.75,4){\psframe[style=occStyle](0.5,1)}
\rput(16,4.5){$P_3$}
\rput(16,5){\rnode{p33}{}}

\rput(10.75,4){\psframe[style=occStyle](0.5,1)}
\rput(11,4.5){$P_3$}
\rput(11,5){\rnode{p32}{}}

\rput(7.75,4){\psframe[style=occStyle](0.5,1)}
\rput(8,4.5){$P_3$}
\rput(8,5){\rnode{p31}{}}
}

{\psset{linewidth=0.3pt,arrowscale=1.4,angleB=90,angleA=-20,offsetB=-0.05}
 \SpecialCoor

 \psline{->}(p21t)([nodesep=-0.1,angle=0,offset=-0.05]p11)
 \psline{->}(p22t)([nodesep=0.1,angle=0,offset=-0.05]p11)

 \psline{->}(p23t)([offset=-0.05]p12)
 \psline{->}(p23t)([offset=-0.05]p15)

 \psline{->}(p24t)([offset=-0.05]p17)
 \psline{->}(p24t)([nodesep=-0.1,angle=0,offset=-0.05]p19)

 \psline{->}(p25t)([nodesep=0.1,angle=0,offset=-0.05]p19)

 \psline{->}(p26t)([nodesep=-0.2,angle=0,offset=-0.05]p110)

 \psline{->}(p27t)([nodesep=0,angle=0,offset=-0.05]p110)

 \psline{->}(p28t)([nodesep=0.2,angle=0,offset=-0.05]p110)
 \psline{->}(p28t)([nodesep=-0.1,angle=0,offset=-0.05]p111)

 \psline{->}(p29t)([nodesep=0.1,angle=0,offset=-0.05]p111)
 \psline{->}(p29t)([nodesep=-0.1,angle=0,offset=-0.05]p112)

 \psline{->}(p210t)([nodesep=0.1,angle=0,offset=-0.05]p112)

 \psline{->}(p31)([offset=-0.07]p21b)
 \psline{->}(p31)([nodesep=-0.2,angle=0,offset=-0.05]p23b)

 \psline{->}(p32)([nodesep=0.2,angle=0,offset=-0.07]p23b)
 \psline{->}(p32)([offset=-0.05]p24b)

 \psline{->}(p33)([offset=-0.07]p25b)
 \psline{->}(p33)([offset=-0.05]p28b)

 }
\end{pspicture}

\caption{The implicit gap graph for the VLG-pattern $R=C \cdot \gap{0}{3} \cdot G\cdot \gap{3}{10} \cdot A$ and the text $T=\textrm{CTGGCCCCGCTCCACGTTGAGCGGCGCTGAG}$. The out-degree of each node is at most two. Compare to Fig.~\ref{fig:explicitgapgraph}.\label{fig:implicitgapgraph}}
\end{figure}
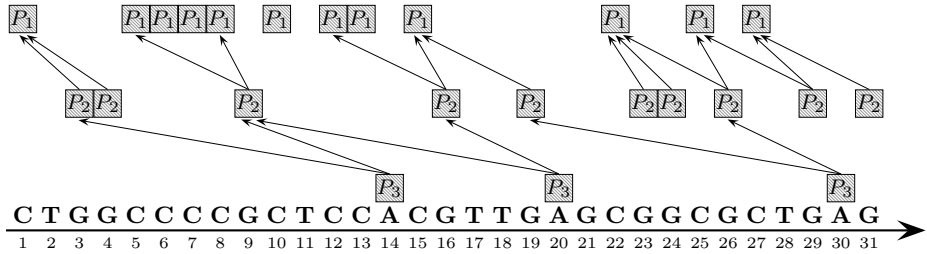

\noindent In the context of these new definitions, we are interested in solving the two following problems:

\begin{description}
\item [The \emph{reporting variable length gaps problem}] (RVLG problem)  is to output all match combinations of $P$ in $T$.

\item [The \emph{implicit reporting variable length gaps problem}] (IRVLG problem) is to output the implicit gap graph of all match combinations of $P$ in $T$. 
\end{description}

\subsection{Constructing the Implicit Gap Graph}\label{sec:constructing_implicitgapgraph}
Algorithm~\ref{alg:irvlgalgorithm} describes how to build the implicit gap graph. Recall that in Algorithm~\ref{alg:ouralgorithm} the ranges in $L_i$ allowed us to determine the relevancy of a newly reported occurrence $x$ of $P_i$ by inspecting the first range in $L_i$ (after the dead ranges had been removed). To build the implicit gap graph, we need to not only determine the relevancy of $x$, but also the first and last occurrence of $P_{i-1}$ compatible with $x$. This information allows us to add the correct edges to the implicit gap graph.

\begin{figure}
\centering
\newpsstyle{occStyle}{%
  fillstyle=vlines,hatchcolor=black!40,
  hatchwidth=0.1\pslinewidth,
  hatchsep=1\pslinewidth,
  linewidth=0.3pt
}

\psset{unit=0.5cm,yunit=0.75,xunit=1.5,arrowscale=1}
\begin{pspicture}[showgrid=false](1.2,-2)(12.5,10)
\footnotesize
\psline[arrowscale=2.5,linewidth=0.8pt]{->}(1.2,3)(12.5,3)\rput(11.3,1.75){\scriptsize Position in $T$}
\psline[arrowscale=2.5,linewidth=0.8pt]{->}(1.2,3)(1.2,10)\rput(0.5,8.5){\scriptsize Time}

{\bfseries
\newcommand{\searchtextheight}{3.55}
\rput(1.5,\searchtextheight){G}
\rput(2.0,\searchtextheight){A}
\rput(2.5,\searchtextheight){C}
\rput(3.0,\searchtextheight){A}
\rput(3.5,\searchtextheight){C}
\rput(4.0,\searchtextheight){A}
\rput(4.5,\searchtextheight){C}
\rput(5.0,\searchtextheight){C}
\rput(5.5,\searchtextheight){T}
\rput(6.0,\searchtextheight){G}
\rput(6.5,\searchtextheight){G}
\rput(7.0,\searchtextheight){C}
\rput(7.5,\searchtextheight){A}
\rput(8.0,\searchtextheight){T}
\rput(8.5,\searchtextheight){A}
\rput(9.0,\searchtextheight){G}
\rput(9.5,\searchtextheight){C}
\rput(10.0,\searchtextheight){C}
\rput(10.5,\searchtextheight){G}
\rput(11.0,\searchtextheight){A}
}

{
\tiny
\newcommand{\axislabelheight}{2.55}
\rput(1.5,\axislabelheight){1}
\rput(2.0,\axislabelheight){2}
\rput(2.5,\axislabelheight){3}
\rput(3.0,\axislabelheight){4}
\rput(3.5,\axislabelheight){5}
\rput(4.0,\axislabelheight){6}
\rput(4.5,\axislabelheight){7}
\rput(5.0,\axislabelheight){8}
\rput(5.5,\axislabelheight){9}
\rput(6.0,\axislabelheight){10}
\rput(6.5,\axislabelheight){11}
\rput(7.0,\axislabelheight){12}
\rput(7.5,\axislabelheight){13}
\rput(8.0,\axislabelheight){14}
\rput(8.5,\axislabelheight){15}
\rput(9.0,\axislabelheight){16}
\rput(9.5,\axislabelheight){17}
\rput(10.0,\axislabelheight){18}
\rput(10.5,\axislabelheight){19}
\rput(11.0,\axislabelheight){20}
}

{\scriptsize
\rput(1.5,7.5){$x_1$}\psline[linewidth=0.03,arrowscale=1.5]{->}(1.6,7.15)(2.25,5.6)
\rput(1.75,4.5){\psframe[style=occStyle](1,1)}
\rput(2.25,5){$P_1$}
\psline{|-|}(3.25,5)(5.75,5)

\rput(2.5,9){$x_2$}\psline[linewidth=0.03,arrowscale=1.5]{->}(2.6,8.65)(3.25,7.1)
\rput(2.75,6){\psframe[style=occStyle](1,1)}
\rput(3.25,6.5){$P_1$}
\psline{|-|}(4.25,6.5)(6.75,6.5)

\rput(3.5,10.5){$x_3$}\psline[linewidth=0.03,arrowscale=1.5]{->}(3.6,10.15)(4.25,8.6)
\rput(3.75,7.5){\psframe[style=occStyle](1,1)}
\rput(4.25,8){$P_1$}
\psline{|-|}(5.25,8)(7.75,8)

\rput(5,12){$y$}\psline[linewidth=0.03,arrowscale=1.5]{->}(5.1,11.65)(5.50,10.1)
\rput(5.25,9){\psframe[style=occStyle](0.5,1)}
\rput(5.50,9.5){$P_2$}

\rput(7.75,10){\psframe[linewidth=0.3\pslinewidth](0.5,1)}
\psline[linewidth=0.3\pslinewidth]{-}(7.75,10)(8.25,11)
\psline[linewidth=0.3\pslinewidth]{-}(7.75,11)(8.25,10)
\rput(8,10.5){$P_2$}
}

{\normalsize
\rput(1.6,1){$L_2^f:$}
\psline{|-|}(3.25,1)(5.74,1)\rput(4.5,1.35){\scriptsize $x_1$}
\psline{|-|}(5.76,1)(6.74,1)\rput(6.25,1.35){\scriptsize $x_2$}
\psline{|-|}(6.76,1)(7.75,1)\rput(7.25,1.35){\scriptsize $x_3$}

\rput(1.6,-1){$L_2^\ell:$}
\psline{|-|}(3.25,-1)(4.24,-1)\rput(3.75,-0.65){\scriptsize $x_1$}
\psline{|-|}(4.26,-1)(5.24,-1)\rput(4.75,-0.65){\scriptsize $x_2$}
\psline{|-|}(5.26,-1)(7.75,-1)\rput(6.50,-0.65){\scriptsize $x_3$}
}

\end{pspicture}

\caption{Example showing how the two lists $L_2^f$ and $L_2^\ell$ store the first and most recent range to cover a position in the text, respectively. The VLG-pattern is $\text{AC} \cdot \gap{1}{5} \cdot \text{T}$. When the occurrence $y$ of $P_2=\text{T}$ at position 9 is reported, we can check the two lists to see that $x_1$ is the first and $x_3$ is the last occurrence of $P_1$ compatible with $y$.\label{fig:twolists}}
\end{figure}
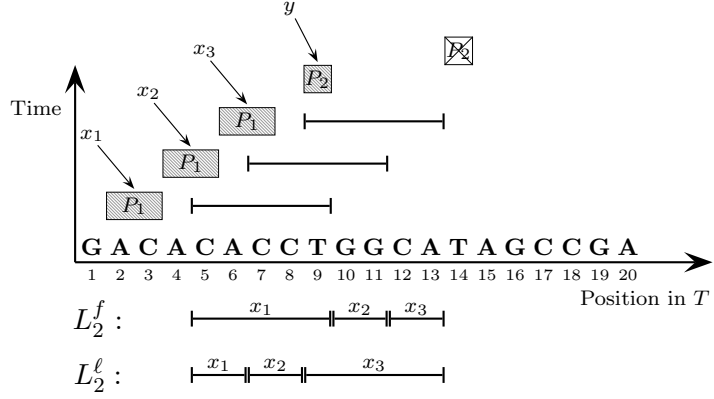

To do this, we replace the list $L_i$ with two lists $L_i^f$ and $L_i^\ell$, for $i=2,\ldots,k$. The idea is that when a position in the text is covered by multiple ranges, $L_i^f$ contains the first range and $L_i^\ell$ contains the most recent range to cover that position. See Fig.~\ref{fig:twolists}. Each range $[s,t]$ in $L_i^f$ or $L_i^\ell$ now also has a reference to the occurrence $x$ of $P_{i-1}$ that defined it, and we will denote the range $[s,t]_x$ to indicate this. When an occurrence $x$ of $P_i$ is reported, we first remove dead ranges from the lists $L_i^f$, $L_i^\ell$, $L_{i+1}^f$ and $L_{i+1}^\ell$ as was done in Algorithm~\ref{alg:ouralgorithm}. If $x$ is relevant a node representing $x$ is added to the implicit gap graph in step~\ref{algstep:appendnode}. In step~\ref{algstep:appendedges}, provided that $x$ is not an occurrence of $P_1$, the two out-going edges of $x$ are added by inspecting $L_i^f$ and $L_i^\ell$ to determine the first and last occurrence of $P_{i-1}$ compatible with $x$. Unless $x$ is an occurrence of $P_k$, the range $R(x)=[\tau + a_i + 1; \tau + b_i + 1]$ is added to the lists $L_{i+1}^f$ and $L_{i+1}^\ell$ in step~\ref{algstep:appendranges} as described in the following section.

\begin{algorithm}
\caption{Algorithm solving the IRVLG problem for a VLG pattern $P$ and a string $T$.\label{alg:irvlgalgorithm}}

\begin{enumerate}
\item Build the AC-automaton for the subpatterns $P_1,P_2,\ldots,P_k$.
\item Process $T$ using the automaton and each time an occurrence $x$ of $P_i$ is reported at position $\tau=\epos(x)$ in $T$ do:
\begin{enumerate}
        \item\label{algstep:removedead} Remove any dead ranges from the lists $L_i^f$, $L_i^\ell$, $L_{i+1}^f$ and $L_{i+1}^\ell$.
        \item\label{algstep:relevancecheck} If $i=1$ or if $\tau - |P_i|=\spos(x)$ is contained in the first range in $L_i^f$ (i.e., $x$ is a relevant occurrence) do:
        \begin{enumerate}
                \item\label{algstep:appendnode} Add the node $x$ to the implicit gap graph.
                \item\label{algstep:appendedges} If $i > 1$: Add the edges $(x,y)$ and $(x,z)$ to the implicit gap graph, where $y$ and $z$ are the occurrences of $P_{i-1}$ defining the first range in $L_i^f$ and $L_i^\ell$, respectively.
                \item\label{algstep:appendranges} If $i < k$: Let $[q,r]_w$ and $[q^\prime,r^\prime]_{w^\prime}$ denote the first and last range in $L_{i+1}^f$ and $L_{i+1}^\ell$, respectively.
                  \begin{enumerate}
                     \item\label{algstep:maintainlistsA} Append the range $[\max(r+1,\tau + a_i + 1),\tau + b_i + 1]_x$ to the end of $L_{i+1}^f$.
                     \item\label{algstep:maintainlistsB} Change the last range in $L_{i+1}^\ell$ to $[q^\prime, \min(r^\prime, \tau + a_i)]_{w^\prime}$.
                     \item\label{algstep:maintainlistsC} Append the range $[\tau + a_i + 1 ; \tau + b_i + 1]_x$ to the end of $L_{i+1}^\ell$.
                  \end{enumerate}        
        \end{enumerate}
\end{enumerate}
\end{enumerate}
\vspace{-0.3cm}
\end{algorithm}

\subsubsection{Maintaining the Range Lists}
When adding a range $[s,t]_x$ defined by an occurrence $x$ of $P_i$ to $L_{i+1}^f$ and $L_{i+1}^\ell$, we simply append it to the end of the list if it does not overlap the last range in the list. Otherwise, to avoid overlapping ranges, we appropriately shorten either the newly added range $[s,t]_x$ (for $L_{i+1}^f$) or the last range in the list (for $L_{i+1}^\ell$).
The way $L_i^f$ is maintained ensures that the first range that covers some position $\tau$ in $T$ will remain the only range covering this position in $L_i^f$. Conversely, $L_i^\ell$ will store the most recent range covering $\tau$.
In Algorithm~\ref{alg:irvlgalgorithm} the steps \ref{algstep:appendranges}, A, B and C append and possibly shorten the ranges according to this strategy.

\subsubsection{Time and Space Analysis}
As for Algorithm~\ref{alg:ouralgorithm}, the time spent for each of the at most $\alpha$ relevant occurrences reported by the AC automaton is amortized constant. Hence the implicit gap graph can be built in $O(n\log k + m + \alpha)$ time. Storing the implicit gap graph for the entire text takes space $O(\alpha)$, since each of the at most $\alpha$ nodes has at most two out-going edges.

We now consider the space needed to store the lists $L_i^f$ and $L_i^\ell$. The ranges in $L_i^f$ and $L_i^\ell$ are no longer guaranteed to have size $c_{i-1}=b_{i-1}-a_{i-1}+1$ nor being separated by at least one position, so the bound of Lemma~\ref{lem:listsizes} needs to be revised, resulting in a slightly increased space bound for storing the lists. Referring to Fig.~\ref{fig:listsizes}, the number of ranges in $L_i^f$ or $L_i^\ell$ at any point in time is at most
\[
d+1 = c_{i-1}+|P_i|-1+a_{i-1}+1 = |P_i|+b_{i-1}+1 \; .
\]
Summing up, the total space required to store the lists increases from $O(m+A)$ to $O(m+B)$, where $B=\sum_{i=1}^{k-1} b_i$ is the sum of the upper bounds of the lengths of the gaps.

Recapitulating, we have the following theorem
\begin{theorem}\label{thm:extensionIRVLG}
The IRVLG problem can be solved in time $O(n\log k + m + \alpha)$ and space $O(m+B+\alpha)$.
\end{theorem}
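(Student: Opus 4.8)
The plan is to verify three claims for Algorithm~\ref{alg:irvlgalgorithm}: that it outputs exactly the implicit gap graph (correctness), that it runs in time $O(n\log k + m + \alpha)$, and that it uses space $O(m+B+\alpha)$. The time and space bounds largely follow the analysis already carried out for Algorithm~\ref{alg:ouralgorithm}, so I would spend the bulk of the argument on correctness and on re-deriving the list-size bound.

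For correctness, the central task is to formulate and prove an invariant describing the two lists. I would show, by induction on the positions of $T$ processed, that after the dead-range removal of step~\ref{algstep:removedead} the lists $L_i^f$ and $L_i^\ell$ each consist of sorted, pairwise-disjoint ranges covering exactly the union of the still-alive ranges defined by relevant occurrences of $P_{i-1}$, with the property that the range covering a position $p$ in $L_i^f$ is attributed to the \emph{first}-reported such occurrence and the range covering $p$ in $L_i^\ell$ to the \emph{most-recently}-reported such occurrence. Granting this invariant, the correctness argument for Algorithm~\ref{alg:ouralgorithm} applies verbatim to show that after removing dead ranges the range containing $\spos(x)=\tau-|P_i|$, if it exists, is the first range in each list; hence the occurrences $y$ and $z$ read off in step~\ref{algstep:appendedges} are precisely the first and last occurrences of $P_{i-1}$ compatible with $x$. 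By the redundancy lemma these are exactly the two surviving out-edges of $x$, so the constructed graph is the implicit gap graph.

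The main obstacle is proving that the shortening operations preserve the invariant. For $L_{i+1}^f$, step~\ref{algstep:maintainlistsA} raises the left endpoint of the newly appended range to $\max(r+1,\tau+a_i+1)$, so the new occurrence $x$ claims only positions of $R(x)$ not already covered by the preceding range---exactly the positions for which $x$ is the first coverer---while previously-claimed positions retain their earlier attribution. Dually, for $L_{i+1}^\ell$ I would argue that truncating the previous last range to $[q',\min(r',\tau+a_i)]_{w'}$ in step~\ref{algstep:maintainlistsB} before appending $[\tau+a_i+1,\tau+b_i+1]_x$ in step~\ref{algstep:maintainlistsC} transfers the ``most recent'' status on the overlap from $w'$ to $x$, which is correct since $x$ is reported later than $w'$. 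Checking that these endpoint manipulations leave the lists sorted, disjoint, and covering the same region as before is the delicate bookkeeping that I expect to consume most of the effort.

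For the bounds I would close as follows. Each of the at most $\alpha$ relevant occurrences triggers $O(1)$ list appends and endpoint updates, and the dead-range removals are amortized $O(1)$ since at most $\alpha$ ranges are inserted in total; adding $O(m\log k)$ to build the AC-automaton and $O(n\log k+\alpha)$ for the traversal gives the time bound. For space, the implicit gap graph has at most $\alpha$ nodes of out-degree at most two, contributing $O(\alpha)$. Lemma~\ref{lem:listsizes} no longer applies, since shortened ranges need not have width $c_{i-1}$ nor be separated by a gap; instead I would bound the number of coexisting ranges directly from the geometry of Fig.~\ref{fig:listsizes}: the live span has length $d=|P_i|+b_{i-1}$ and, as the ranges remain disjoint and each occupies at least one position, at most $d+1=|P_i|+b_{i-1}+1$ of them can be present in $L_i^f$ (and likewise $L_i^\ell$). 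Summing over $i=2,\ldots,k$ yields $O(m+B)$ for the lists, and combining with the graph gives total space $O(m+B+\alpha)$.
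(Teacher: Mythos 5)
Your proposal is correct and follows essentially the same route as the paper's own analysis: the same amortized-constant time argument per occurrence, the same $O(\alpha)$ bound for the graph via out-degree at most two, and the same revised list-size bound $d+1 = |P_i|+b_{i-1}+1$ (replacing Lemma~\ref{lem:listsizes}) summing to $O(m+B)$ over the lists. The only difference is presentational: you propose to prove explicitly the first/most-recent coverage invariant for $L_i^f$ and $L_i^\ell$, which the paper merely asserts when describing how the ranges are appended and shortened.
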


\subsection{A Black-Box Solution for Reporting Match Combinations}\label{sec:blackbox}
The number of match combinations, $\beta$, can be exponential in the number of gaps. The implicit gap graph space efficiently encodes all of these match combinations in a graph of size $O(\alpha)$. Thus, a straightforward solution to the RVLG problem is to construct the implicit gap graph and subsequently traverse it to report the match combinations. Each of the $\beta$ match combinations is a sequence of $k$ integers, so this solution to the RVLG takes time $O(n\log k + m + \alpha + k \beta)$ and space $O(m+B+\alpha)$.

We now show that the RVLG problem can be space efficiently solved using any black-box algorithm for the IRVLG problem. The main idea is a simple splitting of $T$ into overlapping smaller substrings of suitable size. We solve the problem for each substring individually and combine the solutions to solve the full problem. By carefully organizing the computation we can efficiently reuse the space needed for the subproblems.

Let $\mathcal{A}_I$ be any algorithm that solves the IRVLG problem in time $t(n,m,k,\alpha)$ and space $s(n,m,k)$, where $n$, $m$, $k$, and $\alpha$, are the parameters of the input as above. We build a new algorithm $\mathcal{A}_R$ from $\mathcal{A}_I$ that solves the RVLG problem as follows. Assume without loss of generality that $n$ is a multiple of $2(m+B)$. Divide $T$ into $z = \frac{n}{m + B} - 1$ substrings $C_1, \ldots, C_z$, called \emph{chunks}. Each chunk has length $2(m+B)$ and overlaps in $m+B$ characters with each neighbor. We run $\mathcal{A}_I$ on each chunk $C_1, \ldots, C_z$ in sequence to compute the implicit gap graph for each chunk. By traversing the implicit gap graph for each chunk we output the union of the corresponding match combinations. Since each match combination of $P$ in $T$ occurs in at most two neighboring chunks it suffices to only store the implicit gap graph for two chunks at any time.

Next we consider the complexity $\mathcal{A}_R$. Let $\alpha_i$ denote the number of occurrences of the strings of $P$ in $C_i$. For each chunk we run $\mathcal{A}_I$ to produce the implicit gap graph. Given these we compute the union of match combinations in $O(k \beta)$ time. Hence, algorithm $\mathcal{A}_R$ uses time 
\[
O\left(\sum_{i = 1}^z t(2(m+B), m, k, \alpha_i) + k \beta\right).
\]
Next consider the space. We only need to store the implicit gap graphs for two chunks at any time. Since the space required for each chunk is $O((m+B)k)$, the total space becomes
\[
O\left((m+B)k + s(2(m+B), m, k)\right).
\]
The black-box algorithm efficiently converts algorithms for the IRVLG problem to the RVLG problem, resulting in the following theorem.
\begin{theorem}
Given an algorithm solving the IRVLG problem in time $t(n,m,k,\alpha)$ and space $s(n,m,k)$, there is an algorithm solving the RVLG problem in time $O\left(\sum_{i = 1}^z t(2(m+B), m, k, \alpha_i) + k \beta\right)$ and space $O\left((m+B)k + s(2(m+B), m, k)\right)$.
\end{theorem}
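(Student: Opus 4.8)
The plan is to turn the overlapping-chunk reduction sketched above into a correctness-plus-accounting argument: the only genuinely new content is showing that running $\mathcal{A}_I$ on each chunk and merging the outputs reports every match combination of $P$ in $T$ \emph{exactly once}, after which both stated bounds follow by summing the per-chunk costs. The foundation is a \emph{span bound}. If $S = P_1 G_1 \cdots G_{k-1} P_k$ is a substring of $T$ matching $P$, then $|S| = \sum_{i=1}^k |P_i| + \sum_{i=1}^{k-1}|G_i| \le m + \sum_{i=1}^{k-1} b_i = m + B$. Hence every match, and in particular every match combination, occupies an interval of at most $\ell := m+B$ consecutive positions of $T$.

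First I would fix the chunk geometry. Assuming (as in the statement) that $n$ is a multiple of $2(m+B)$, write $C_i = [(i-1)\ell + 1,\ (i+1)\ell]$ for $i = 1,\ldots,z$ with $z = n/\ell - 1$; then consecutive chunks overlap in exactly $\ell$ positions while $C_i \cap C_{i+2} = \emptyset$. I would then prove the covering claim: any interval $[p,q]$ with $q - p + 1 \le \ell$ lies inside some chunk. Taking $i = \lceil p/\ell\rceil$ (capped at $z$) gives $(i-1)\ell + 1 \le p$ and $q \le p + \ell - 1 < (i+1)\ell$, so $[p,q]\subseteq C_i$; and since $C_i$ and $C_{i+2}$ are disjoint it can lie in at most two \emph{consecutive} chunks. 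Thus each match combination is discovered in at least one and at most two neighbouring chunks. To emit it exactly once I would assign it, by its start position $p$, to the chunk $C_{\lceil p/\ell\rceil}$ (the last chunk absorbing start positions in $(n-\ell,n]$); because the first halves $[(i-1)\ell+1,\ i\ell]$ partition the relevant range of start positions, when enumerating the combinations of $C_i$ from its implicit gap graph we output a combination iff its start position falls in $C_i$'s first half. This test is purely local—it uses only $p$ and $i$—so no previously reported combination need be stored, which is consistent with (indeed stronger than) the stated policy of keeping only two chunk graphs at a time.

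It remains to charge the costs. Running $\mathcal{A}_I$ on $C_i$ costs $t(2(m+B), m, k, \alpha_i)$, and summing over the $z$ chunks gives the first time term. Each emitted sequence costs $O(k)$ to write, and since every one of the $\beta$ combinations is enumerated in at most two chunks, the total enumeration-and-output work is $O(k\beta)$, yielding the claimed time bound. For space, a chunk of length $2(m+B)$ contains at most $2(m+B)$ occurrences of each of the $k$ strings, so $\alpha_i \le 2(m+B)k$; as its implicit gap graph has out-degree at most two per node, it has $O(\alpha_i) = O((m+B)k)$ nodes and edges, and holding at most two such graphs costs $O((m+B)k)$ space, while the working space $s(2(m+B), m, k)$ of $\mathcal{A}_I$ is reused across chunks. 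This gives total space $O\bigl((m+B)k + s(2(m+B), m, k)\bigr)$. The one non-mechanical step—and the main obstacle—is this exactly-once accounting: the span bound together with the disjointness $C_i \cap C_{i+2} = \emptyset$ are precisely what simultaneously guarantee that no combination is missed, that none survives the start-position filter twice, and that the output work stays $O(k\beta)$ even though combinations may be enumerated in up to two chunks.
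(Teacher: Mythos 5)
Your proof is correct and takes essentially the same route as the paper: the same overlapping chunks of length $2(m+B)$ with $z = \frac{n}{m+B}-1$, running $\mathcal{A}_I$ on each chunk while keeping only a constant number of chunk graphs, and the same time and space accounting (per-chunk cost $t(2(m+B),m,k,\alpha_i)$, $O(k\beta)$ output work, and graph size $O(\alpha_i)=O((m+B)k)$). The only difference is that you make explicit what the paper leaves implicit—the span bound $|S|\le m+B$, the bound $\alpha_i\le 2(m+B)k$, and a local start-position filter ensuring each of the $\beta$ combinations is emitted exactly once where the paper simply says it outputs the union—which is a welcome tightening, not a different approach.
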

If we use the result from Theorem~\ref{thm:extensionIRVLG}, we obtain an algorithm that uses time 
\begin{align*}
&O\left(\left(\sum_{i = 1}^z t(2(m+B), m, k, \alpha_i)\right) + k \beta\right) = \\
& \qquad \qquad O\left(\frac{n}{m+B}\left(2(m+B)\log k + m\right ) + \alpha + k \beta \right) = O(n\log k + m + \alpha + k \beta) \; ,
\end{align*}
where the term $m$ in the last expression is needed for the case where $m > n$.
The space usage is
\[
O\left((m+B)k + s(2(m+B), m, k)\right) = O\left((m+B)k + m + B + \max_{i = 1,\ldots, z} \alpha_i\right) = O((m+B)k) \; ,
\]
where the last equality holds, since $a_i \leq (m+B)k$ for all $i$. In summary, we have the following result for the RVLG problem.
\begin{theorem}\label{thm:rvlg}
The RVLG problem can be solved in time $O(n\log k + m + \alpha + k\beta)$ and space $O((m+B)k)$.
\end{theorem}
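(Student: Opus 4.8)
The plan is to instantiate the black-box reduction of the preceding theorem with the concrete IRVLG algorithm from Theorem~\ref{thm:extensionIRVLG} and then simplify. I would set $t(n,m,k,\alpha) = O(n\log k + m + \alpha)$ and $s(n,m,k) = O(m+B+\alpha)$, substitute these into the generic time bound $O\!\left(\sum_{i=1}^z t(2(m+B),m,k,\alpha_i) + k\beta\right)$ and the generic space bound $O\!\left((m+B)k + s(2(m+B),m,k)\right)$, and verify that both collapse to the claimed expressions. Here $z = \frac{n}{m+B} - 1 < \frac{n}{m+B}$ and each chunk has length $2(m+B)$, so the computation is essentially a matter of pushing $z$ through the per-chunk cost and then invoking two occurrence-counting arguments.

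For the time bound I would expand $\sum_{i=1}^z t(2(m+B),m,k,\alpha_i) = O\!\left(z\cdot(m+B)\log k + z\cdot m + \sum_{i=1}^z \alpha_i\right)$. The first term is $O(n\log k)$ because $z(m+B)=O(n)$; the second is $O(n)$ because $m\le m+B$; and the third reduces to $O(\alpha)$. The last reduction is the step needing justification: since consecutive chunks overlap in exactly $m+B$ characters, every position of $T$ lies in at most two chunks, so each of the $\alpha$ string occurrences is detected in at most two chunks and hence $\sum_i \alpha_i = O(\alpha)$. Adding the $k\beta$ term carried over from the black-box bound yields $O(n\log k + n + \alpha + k\beta)$, and including $m$ to absorb the degenerate case $m>n$ (excluded by the earlier running-time analysis) gives $O(n\log k + m + \alpha + k\beta)$.

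For the space bound, $s(2(m+B),m,k)$ becomes $O(m+B+\max_i \alpha_i)$, so I must bound $\max_i \alpha_i$, the largest number of string occurrences inside a single chunk. Because a chunk has length $2(m+B)$, each of the $k$ strings occurs at most $2(m+B)$ times within it, so $\max_i \alpha_i = O((m+B)k)$. Consequently $s(2(m+B),m,k) = O((m+B)k)$, which is dominated by the $(m+B)k$ term already present, and the total space simplifies to $O((m+B)k)$.

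The argument is mostly routine substitution, and I expect the only genuinely delicate points to be the two occurrence-counting claims: establishing $\sum_i \alpha_i = O(\alpha)$ from the constant chunk-overlap multiplicity, and establishing $\max_i \alpha_i = O((m+B)k)$ so that the $\alpha$-dependence of the IRVLG space bound is safely swallowed by $(m+B)k$. Both hinge on the specific choice of chunk length $2(m+B)$ and overlap $m+B$ in the black-box construction, which is exactly what guarantees that every match combination of $P$—spanning at most $m+B$ positions—fits entirely inside some chunk and is reported a bounded number of times.
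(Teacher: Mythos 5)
Your proposal is correct and follows essentially the same route as the paper: instantiate the black-box theorem with Theorem~\ref{thm:extensionIRVLG}, push $z = \frac{n}{m+B}-1$ through the per-chunk time bound, and absorb the $\alpha$-dependence of the IRVLG space bound via $\max_i \alpha_i = O((m+B)k)$. In fact your write-up is slightly more careful than the paper's, which leaves the bound $\sum_i \alpha_i = O(\alpha)$ implicit and states the per-chunk occurrence bound with a typo ($a_i$ where $\alpha_i$ is meant).
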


\subsection{Reporting Match Combinations On the Fly}\label{sec:onthefly}
We now show how a simple extension of our algorithm provides an alternative solution to the RVLG problem achieving the same space and time complexity as the black-box solution. The idea is to use Algorithm~\ref{alg:irvlgalgorithm} and report the match combinations on the fly, while continually removing nodes from the implicit gap graph that no longer can be part of a match combination. We remove the nodes using a method similar to that for removing dead ranges in the lists $L_i^f$ and $L_i^\ell$.

We say that a node $x$ of $P_i$ in the implicit gap graph is \emph{dead} if $x$ can not be part of a future match combination. This happens when
\[
\tau > \epos(x) + \sum_{j=i+1}^k b_{j-1} + |P_j| \; .
\]
Like dead ranges, we can remove dead nodes from the implicit gap graph in amortized constant time. Consequently, all match combinations can be reported in time $O(n\log k + m + \alpha + k\beta)$. Removing the dead nodes ensures that the number of $P_i$ nodes in the implicit gap graph at any time is at most $1+\sum_{j=i+1}^k b_{j-1} + |P_j|$. Thus, the total number of nodes never exceeds
\[
\sum_{i=1}^k 1+\sum_{j=i+1}^k b_{j-1} + |P_j| ~=~ O((m+B)k) \; .
\]
In summary, the algorithm solves the RVLG problem in time $O(n\log k + m + \alpha + k\beta)$ and space $O((m+B)k)$, so it provides and alternative proof of Theorem~\ref{thm:rvlg}.


\begin{thebibliography}{10}

\bibitem{AC1975}
A.~V. Aho and M.~J. Corasick.
\newblock Efficient string matching: an aid to bibliographic search.
\newblock {\em Commun. ACM}, 18(6):333--340, 1975.

\bibitem{Bille06}
P.~Bille.
\newblock New algorithms for regular expression matching.
\newblock In {\em Proc. 33rd ICALP}, pages 643--654, 2006.

\bibitem{BT2009}
P.~Bille and M.~Thorup.
\newblock Faster regular expression matching.
\newblock In {\em Proc. 36th ICALP}, pages 171--182, 2009.

\bibitem{BT2010}
P.~Bille and M.~Thorup.
\newblock Regular expression matching with multi-strings and intervals.
\newblock In {\em Proc. 21st SODA}, 2010.

\bibitem{BB1994}
P.~Bucher and A.~Bairoch.
\newblock A generalized profile syntax for biomolecular sequence motifs and its
  function in automatic sequence interpretation.
\newblock In {\em Proc. 2nd ISMB}, pages 53--61, 1994.

\bibitem{CIMRTT2002}
M.~Crochemore, C.~Iliopoulos, C.~Makris, W.~Rytter, A.~Tsakalidis, and
  K.~Tsichlas.
\newblock Approximate string matching with gaps.
\newblock {\em Nordic J. of Computing}, 9(1):54--65, 2002.

\bibitem{FG2008}
K.~Fredriksson and S.~Grabowski.
\newblock Efficient algorithms for pattern matching with general gaps,
  character classes, and transposition invariance.
\newblock {\em Inf. Retr.}, 11(4):335--357, 2008.

\bibitem{FG2009}
K.~Fredriksson and S.~Grabowski.
\newblock Nested counters in bit-parallel string matching.
\newblock In {\em Proc. 3rd LATA}, pages 338--349, 2009.

\bibitem{HS2011}
T.~Haapasalo, P.~Silvasti, S.~Sippu, and E.~Soisalon-Soininen.
\newblock Online dictionary matching with variable-length gaps.
\newblock In {\em Proc. 10th SEA}, pages 76--87, 2011.

\bibitem{Hofmann1999}
K.~Hofmann, P.~Bucher, L.~Falquet, and A.~Bairoch.
\newblock The prosite database, its status in.
\newblock {\em Nucleic Acids Res}, (27):215--219, 1999.

\bibitem{KMP1977}
D.~E. Knuth, J.~James H.~Morris, and V.~R. Pratt.
\newblock Fast pattern matching in strings.
\newblock {\em SIAM J. Comput.}, 6(2):323--350, 1977.

\bibitem{LAIP2003}
I.~Lee, A.~Apostolico, C.~S. Iliopoulos, and K.~Park.
\newblock Finding approximate occurrences of a pattern that contains gaps.
\newblock In {\em Proc. 14th AWOCA}, pages 89--100, 2003.

\bibitem{MPVZ2005}
M.~Morgante, A.~Policriti, N.~Vitacolonna, and A.~Zuccolo.
\newblock Structured motifs search.
\newblock {\em J. Comput. Bio.}, 12(8):1065--1082, 2005.

\bibitem{Myers1996}
E.~W. Myers.
\newblock Approximate matching of network expressions with spacers.
\newblock {\em J. Comput. Bio.}, 3(1):33--51, 1992.

\bibitem{Myers1992}
E.~W. Myers.
\newblock A four-russian algorithm for regular expression pattern matching.
\newblock {\em J. ACM}, 39(2):430--448, 1992.

\bibitem{MM1993}
G.~Myers and G.~Mehldau.
\newblock A system for pattern matching applications on biosequences.
\newblock {\em CABIOS}, 9(3):299--314, 1993.

\bibitem{NR2003}
G.~Navarro and M.~Raffinot.
\newblock Fast and simple character classes and bounded gaps pattern matching,
  with applications to protein searching.
\newblock {\em J. Comput. Bio.}, 10(6):903--923, 2003.

\bibitem{NR2004}
G.~Navarro and M.~Raffinot.
\newblock New techniques for regular expression searching.
\newblock {\em Algorithmica}, 41(2):89--116, 2004.

\bibitem{RILMS2006}
M.~S. Rahman, C.~S. Iliopoulos, I.~Lee, M.~Mohamed, and W.~F. Smyth.
\newblock Finding patterns with variable length gaps or don't cares.
\newblock In {\em Proc. 12th COCOON}, pages 146--155, 2006.

\bibitem{Thomp1968}
K.~Thompson.
\newblock Regular expression search algorithm.
\newblock {\em Commun. ACM}, 11:419--422, 1968.

\end{thebibliography}
\end{document}